\newtheorem{theorem}{Theorem}[section]
\newtheorem{corollary}{Corollary}[section]
\begin{document}

\title{Aggregate Estimation Over Dynamic Hidden Web Databases}
\numberofauthors{1}
\author{Weimo Liu$^\dag$, Saravanan Thirumuruganathan$^\ddag$, Nan Zhang$^\dag$, Gautam Das$^\ddag$\\\affaddr{The George Washington University$^\dag$; University of Texas at Arlington$^\ddag$}}

\maketitle

\begin{abstract}

Many databases on the web are ``hidden'' behind (i.e., accessible only through) their restrictive, form-like, search interfaces. Recent studies have shown that it is possible to estimate aggregate query answers over such hidden web databases by issuing a small number of carefully designed search queries through the restrictive web interface. A problem with these existing work, however, is that they all assume the underlying database to be static, while most real-world web databases (e.g., Amazon, eBay) are frequently updated. In this paper, we study the novel problem of estimating/tracking aggregates over dynamic hidden web databases while adhering to the stringent query-cost limitation they enforce (e.g., at most 1,000 search queries per day). Theoretical analysis and extensive real-world experiments demonstrate the effectiveness of our proposed algorithms and their superiority over baseline solutions (e.g., the repeated execution of algorithms designed for static web databases).
\end{abstract}

\section{Introduction} \label{sec:intro}

In this paper, we develop novel techniques for estimating and tracking various types of aggregate queries, e.g., COUNT and SUM, over {\em dynamic} web databases that are hidden behind proprietary search interfaces and frequently changed.

\noindent{\bf Hidden Web Databases:} Many web databases are ``hidden'' behind restrictive search interfaces that allow a user to specify the desired values for one or a few attributes (i.e., form a conjunctive {\em search query}), and return to the user a small number (bounded by a constant $k$ which can be 50 or 100) of tuples that match the user-specified query, selected and ranked according to a proprietary scoring function. Examples of such databases include Yahoo!~Autos, Amazon.com, eBay.com, CareerBuilder.com, etc.

\noindent{\bf Problem Motivations:} The problem we consider in this paper is how a third party can use the restrictive web interface to estimate and track {\em aggregate query answers} over a dynamic web database.
Aggregate queries are the most common type of queries in decision support systems as they enable effective analysis to glean insights from the data.

\begin{itemize}
\item Tracking the number of tuples in a web database is by itself an important problem. For example, the number of active job postings at Monster.com or listings at realestate.com can provide an economist with real-time indicators of US economy. Similarly, tracking the number of apps in Apple's App Store and Google Play provides us a continuous understanding of the growth of the two platforms. Note that while some web databases (e.g., App Store) periodically publish their sizes for advertisement purposes, such published size is not easily verifiable, and sometimes doubtful because of the clear incentive for database owners to exaggerate the number.
\item More generally, there is significant value in monitoring a wide variety of aggregates. For example, a sudden drop on the COUNT of used Ford F-150s in a used car database may indicate a subsequent increase of prices. Similarly, a rapid increase of AVG salary offered on job postings which require a certain skill (e.g., Java) may indicate an expansion of the corresponding market.
\item A number of economically important aggregates such as changes in employment month-over-month or the stock prices of specific sectors of companies (e.g., semiconductor stocks) fluctuate wildly. Tracking these aggregates effectively is of paramount interest to policy makers. In general, for any database, aggregates that are not overly broad (to the degree of COUNT(*) or AVG over the entire database) tend to change rapidly, with higher change frequencies for narrower aggregates. 
\end{itemize}

\noindent{\bf Challenges:} Since {\em aggregate queries} are not directly supported by the web interface, one can only estimate them by combining the results of multiple {\em search queries} that are supported. Prior work has shown ways to translate aggregate queries over a {\em static} hidden database to a small number of search queries, and then generate unbiased estimations of SUM and COUNT aggregates from the search query answers. But no existing technique supports dynamic databases that change over time. A seemingly simple approach to tackle the dynamic case is to repeatedly execute (at certain time interval) the existing ``static'' algorithms \cite{DJJ+10}. Unfortunately, this approach has two critical problems:
\begin{itemize}
\item Many real-world web databases limits the number of search queries one can issue through
per-IP (for web interface queries) or per-developer key (for API based queries) limits. In many cases, this daily limit is too low to sustain a complete execution of the static algorithm (to reach a reasonable accuracy level). For example, eBay limits API calls to 5,000 per day, making it extremely difficult for a third party user to track changes that occur to the database\footnote{\small{Note that while some web databases do display a list of recently added items (e.g., ``New and Noteworthy'' for App Store), such a list is rarely comprehensive. In addition, there is no direct way to identify tuples that were recently deleted (e.g., cancelled job listings at monster.com) or updated - which are important for understanding how the database changes over time.}}.
The existing (static) techniques handle this by stretching its execution across multiple days, and assume that the database does not change within the execution period - an assumption that often does not hold for frequently updated web databases such as eBay.
\item Even when the daily limit is high enough, repeated executions actually wastes a lot of search queries. To understand why, consider an extreme-case scenario where the underlying database remains unchanged. With repeated execution, the estimation error remains the same as the first-day estimate, even after numerous queries have been issued in later days. On the other hand, it is easy to see that if one somehow detects the fact that the database changes little over time, then all queries issued afterwards can be used to {\em improve} the estimation accuracy and reaching a significantly lower error than the simple ``repeated execution'' strategy.
\end{itemize}

Another straightforward approach is to track all changes that occur to the underlying database - i.e., to determine which tuples got inserted/deleted - and then use these changes to update the previous aggregate estimations. This approach, however, likely requires an extremely large number of queries because, as shown in previous studies of web database crawling \cite{SZTJ12}, the crawling of changed tuples through the web interface requires a prohibitively high query cost for most real-world settings.

\noindent{\bf Problem of Dynamic Aggregate Estimation:} In this paper, we initiate a study of estimating and tracking aggregates over a dynamic hidden web database. Note that the definition of ``aggregate'' here is more subtle than the static case. To understand why, note that any aggregate estimation process may require multiple (search) queries to be issued through the web interface. The issuance of these queries take time - during which the database and the aggregate might have already changed, making the estimation of it an ill-defined problem.

As such, for the ease of formally studying the problem of aggregate estimation over dynamic hidden databases, we introduce a theoretic concept ``{\em round}'' (i.e., a fixed time interval). Specifically, we assume that the database only changes at the starting instant of each round. With this definition, our objective is then (well) defined to be estimating an aggregate over the database state at the current round. Note that this ``round'' concept is introduced for theoretical purposes only, because many real-world databases are updated at arbitrary time/intervals - nevertheless, as we shall show in \S\ref{sec:sim}, results in the paper can be easily extended to address these cases.

\noindent{\bf Outline of Technical Results: } Aiming to address the shortcomings of the above-described ``repeated execution'' strategy (a strategy which we refer to as RESTART-ESTIMATOR), our first result is REISSUE-ESTIMATOR: Instead of restarting the process all over again at every round, we try to {\em infer} whether and how search query answers received in the last round change in this round (and revise the aggregate estimations accordingly).

It might seem as if one has to reissue every query issued in the last round to obtain its updated answer. But a key rationale for REISSUE-ESTIMATOR is that this is indeed not the case. Specifically, we find that the query cost required for detecting/inferring changes is far lower because of the interdependencies between different query answers. For example, often times confirming that one query answer does not change is enough for us to infer that multiple search queries must have stayed unchanged in the last round. This leads to a significant saving of query cost.

Nonetheless, it is important to note that, besides the reduction of query cost, we have another important objective: improving the accuracy of aggregate estimations. Here it might appear that REISSUE-ESTIMATOR has a major problem - specifically, unlike RESTART-ESTIMATOR which accesses / accumulates new ``sample points'' round after round and thereby expands its reach to a ever-increasing portion of the database, the reissuing idea can be stuck with the same, sometimes-unlucky, picks generated in the first few rounds because of its obligation on updating historic query answers. Intuitively, this could make it very difficult for REISSUE-ESTIMATOR to correct initial estimation errors (i.e., ``unlucky picks'') and converge to an accurate estimation.

Somewhat surprisingly, we found through theoretical analysis (and verified through experiments over real-world datasets) that, in practice, the exact opposite is usually true. That is, unless the database is almost regenerated after each round - e.g., with all old tuples removed and entirely new tuples inserted, in which case restarting is {\em sometimes} better - reissuing almost always offers a better tradeoff between estimation accuracy and query cost. We derive rigid conditions for the comparison between reissuing and restarting when different types of aggregates are being estimated, and verify their correctness through experiments.

While the idea of query reissuing significantly improves the performance of aggregate estimation over dynamic hidden databases, we find that REISSUE-ESTIMATOR still wastes queries in certain cases - specifically, when the database undergoes little change in a round. For many real-world databases that do not see frequent updates, this could lead to a significant waste of queries (and lackluster accuracy) after a period of time.

We develop RS-ESTIMATOR to address this problem. Our idea is close in spirit to reservoir sampling, which automatically maintains a sample of a database according to how the database changes. Specifically, the fewer changes happen to the database, the few changes the sample will see. We show through theoretical analysis and experimental studies that RS-ESTIMATOR further outperforms REISSUE-ESTIMATOR, especially in cases where changes to the database are small and/or infrequent.

In summary, the main contributions of this paper are as follows.
\begin{itemize}
\item We initiate the study of aggregate estimation over a {\em dynamic} web database through its restrictive search interface, so as to enable a wide range of applications such as tracking the number of active job postings at monster.com.
\item We propose a query-reissuing technique which leverages the historic query answers to produce up-to-date aggregate estimations without bias. We prove that, for many aggregates, our REISSUE-ESTIMATOR significantly outperforms the repeated execution of static algorithms.
\item We also propose a bootstrapping-based technique to auto-adjust the query plan according to how much the database has changed. The resulting algorithm, RS-ESTIMATOR, further improves upon REISSUE-ESTIMATOR especially when the database is updated at various intervals.
\item We provide a thorough theoretical analysis and experimental studies over synthetic and real-world datasets, including live experiments on real-world websites such as Amazon.com and eBay.com, that demonstrate the effectiveness of our proposed approaches.
\end{itemize}

\noindent {\bf Paper Organization:} The rest of this paper is organized as follows. In \S2 we introduce preliminaries and  problem definition. \S3 and \S4 are devoted to the development of REISSUE- and RS-ESTIMATOR, respectively. In \S5 we discuss a number of system design issues - e.g., how to deal with real-world web databases with various query limitations. \S6 contains a detailed experimental evaluation of our proposed approaches. \S7 discusses related work, followed by the conclusion in \S8.

\section{Motivation}

In this section, we start with introducing the model of hidden databases, their web interfaces, and how they change over time. Then, we define the objectives of aggregate estimation/tracking.

\subsection{Model of Dynamic Hidden Web Databases}
\noindent {\bf Hidden Web Database and Query Interface:} Consider a hidden database $D$ with $m$ attributes $A_1$, $\ldots, A_m$. Let $U_i$ be the domain for attribute $A_i$. For a tuple $t \in D$, we use $t[A_i] \in U_i$ to denote the value of $A_i$ for $t$. In this paper, we focus on categorical attributes, as numerical attributes can be discretized accordingly. We also assume all tuples to be distinct with no NULL values.

To query the hidden database, a user specifies the desired values for a subset of attributes through a conjunctive query of the form $q$: SELECT * FROM $D$ WHERE $A_{i_1}=u_{i_1}$ AND $\ldots$ AND $A_{i_s} = u_{i_s}$ where $i_1, \ldots, i_s \in [1,m]$  and $u_{i_j} \in U_{i_j}$. Let $Sel(q) \subseteq D$ be the tuples matching $q$. When there are more than $k$ (where $k$ is a constant fixed by the hidden database) matching tuples (i.e., $|Sel(q)| > k$), these tuples are ordered according to a proprietary {\em scoring function}, and then only the top-$k$ tuples are returned to the user.  We call such a query as an {\em overflowing} query. We say that a query is {\em valid} if it returns between 1 and $k$ tuples. A query {\em underflows} if it is too restrictive and returns empty.

\noindent {\bf Dynamic Hidden Databases:} Real-world hidden web databases change over time.
In most part of the paper, we consider a {\em round-update} model where modifications occur at the beginning instant of each round. This model is introduced purely for ease of theoretical analysis. Then, in \S\ref{sec:sim}, we extend the notion to arbitrary updates where we make {\em no assumption} about what frequency, or at what time, a hidden database is updated. Following the round-update model, we set round to be a short interval (e.g., one day) during which the database is considered fairly static.  We denote the $i$-th round as $\mathcal{R}_i$, and represent the database state during Round $\mathcal{R}_i$ as $D_i$.

Most web databases impose a per user/per IP limit on the number of queries that can issued over a time frame. Let $G$ be the number of queries one can issue to the database per round. Note that if the query limit is enforced at a different time interval (e.g., per second), we can always extrapolate the limit according to the length of a round and set $G$ accordingly.

It is important to note that we assume no knowledge of which tuples have been inserted to and/or deleted from the hidden database in a given time period. That is, we consider the worst-case scenario where each tuple is {\em not} associated with a timestamp indicating the last time it has been inserted or changed. The key rationale behind making this worst-case assumption is the complexity of timestamps provided by real-world web databases. For example, while Amazon discloses the first date a product was added to Amazon.com, it does not disclose the last time an attribute (e.g., Price) was changed, or a variation (e.g., a new color) was added. iOS App Stores have a similar policy where software updates are timestamped but price and/or description changes are not. Because of these complex policies, for the purpose of this paper, we do not consider the usage of timestamps for analyzing hidden web databases.

\subsection{Objectives of Aggregate Estimation} \label{sec:oae}

In this paper, we consider two types of aggregate estimation tasks over a dynamic hidden database:
\begin{itemize}
\item Single-round aggregates: At each round $\mathcal{R}_i$, estimate an aggregate query of the form $Q(D_i)$: SELECT AGG($f(t)$) FROM $D_i$ WHERE {\em Selection Condition}, where AGG is the aggregate function (we consider SUM, COUNT, and AVG in this paper), $f(t)$ is any function over (any attribute or attributes of) a tuple $t$, and {\em Selection Condition} is any condition that can be independently evaluated over each tuple (i.e., there is a deterministic function $g(t)$ such that $g(t) = 1$ if tuple $t$ satisfies the selection condition and $0$ otherwise).

\item Trans-round aggregates: At each round $\mathcal{R}_i$, estimate an aggregate over data from both the current round and the previous ones of the form $Q$: SELECT AGG($f(t(D_1)$, $\ldots$, $t(D_i))$) FROM $D_1, \ldots, D_i$ WHERE {\em Selection Condition} - where $t(D_j)$ is the value of a tuple $t$ in Round $\mathcal{R}_j$ (if the tuple indeed exists in $D_j$). An example is the change of database size from $D_{i-1}$ to $D_i$ - in which case $f(t(D_{i-1}), t(D_i)) = f^\prime(t(D_i)) - f^\prime(t(D_{i-1}))$ where $f^\prime(t(D_j)) = 1$ if $t \in D_j$ and $0$ otherwise.
\end{itemize}

We would like to note that, while some trans-round aggregates (e.g., change of database size) can be ``rewritten'' as a combination of single-round aggregates, many cannot. For example, no combination of single-round aggregates can answer a trans-round aggregate ``the average price drop for all products with a lower price today than yesterday''. As we shall show in \S\ref{sec:rie}, even for those trans-round aggregates that can be rewritten, directly estimating a trans-round aggregate may yield significantly more accurate results than post-processing single-round estimates.

For both categories, we aim to estimate the answer to an {\em aggregate query} by issuing a small number of {\em search queries} through the restrictive web interface. For most part of the paper, we focus on tracking a fixed aggregate query over multiple rounds, and extend the results in \S\ref{sec:sim} to handling ad-hoc aggregate queries (e.g., how to estimate the change of database size from $\mathcal{R}_1$ to $\mathcal{R}_2$ if we receive this query after $\mathcal{R}_1$ is already passed).

For a given aggregate, the goal of aggregate estimation is two-fold. One is to maintain query cost per round below the database-imposed limit $G$, while the other is to minimize the estimation error. For measuring such estimation error, note the two components composing the error of estimation $\tilde{\theta}$ for an aggregate $\theta$: {\em Bias} $E(\tilde{\theta} - \theta)$ where $E(\cdot)$ is expected value taken over randomness of $\tilde{\theta}$, and {\em Variance} of $\tilde{\theta}$. Specifically, $\tilde{\theta}$'s mean squared error (MSE) is
\begin{align}
\text{MSE}(\widetilde{\theta}) = \text{Bias}^2(\widetilde{\theta}) + \text{Variance}(\widetilde{\theta}).
\end{align}
Thus, in terms of minimizing estimation error, our goal is to reduce both estimation bias and variance.

\section{REISSUE-ESTIMATOR} \label{sec:rie}

Recall from \S1 a ``restarting'' baseline for aggregate estimation over dynamic hidden database which treats every round as a separate (static) hidden database and reruns a static aggregate estimation algorithm \cite{DJJ+10} at each round to produce an independent estimation. While simple, this restarting baseline wastes numerous queries because no information is retained/reused from round to round. Our goal in this section is to develop REISSUE-ESTIMATOR, an algorithm that significantly outperforms the baseline by leveraging historic query answers. Specifically, we first introduce the main idea of query reissuing, and then present theoretical analysis for a key question: is reissuing or restarting better for tracking aggregates when the database has varying degrees of change?

\noindent{\em Running Example:} We use Figure~\ref{fig:qtr} (explained below) as a running example throughout this section.

\subsection{Query Reissuing for Multiple Rounds} \label{sec:ffr}
For the ease of developing the idea of query reissuing and understanding its difference with the restarting baseline, we start this subsection by presenting a simple interpretation of the static aggregate-estimation techniques developed in the literature \cite{DJJ+10}. Then, we describe the idea of query reissuing and explain why it may lead to a significant saving of query cost.

\noindent{\bf Simple Model of Static Aggregate Estimation:} Existing algorithms for single-round aggregate estimation are centered on an idea of {\em drill downs} over a query tree depicted in Figure~\ref{fig:qtr}. The query tree organizes queries from broad to specific on top to bottom. Specifically, the root level is SELECT * FROM $D$. Going each level deeper, the query is appended by a conjunctive predicate formed by (1) the attribute corresponding to the level (Level $i$ is corresponding to Attribute $A_i$), and (2) the attribute domain value corresponding to the branch - e.g., in Figure~\ref{fig:qtr}, the three second-level nodes, from left to right, are corresponding to queries SELECT * FROM D WHERE $A_1 = u_1$, WHERE $A_1 = u_2$, and WHERE $A_1 = u_3$, respectively. One can see that each leaf node of the tree is corresponding to a fully-specified, $m$-predicate, conjunctive query. There are $\prod^m_{i=1} |U_i|$ such nodes. As such, we number the leaf nodes from $1$ to $\prod^m_{i=1} |U_i|$, respectively.

\begin{figure} [ht]
\centering
\includegraphics[width = 2.8in]{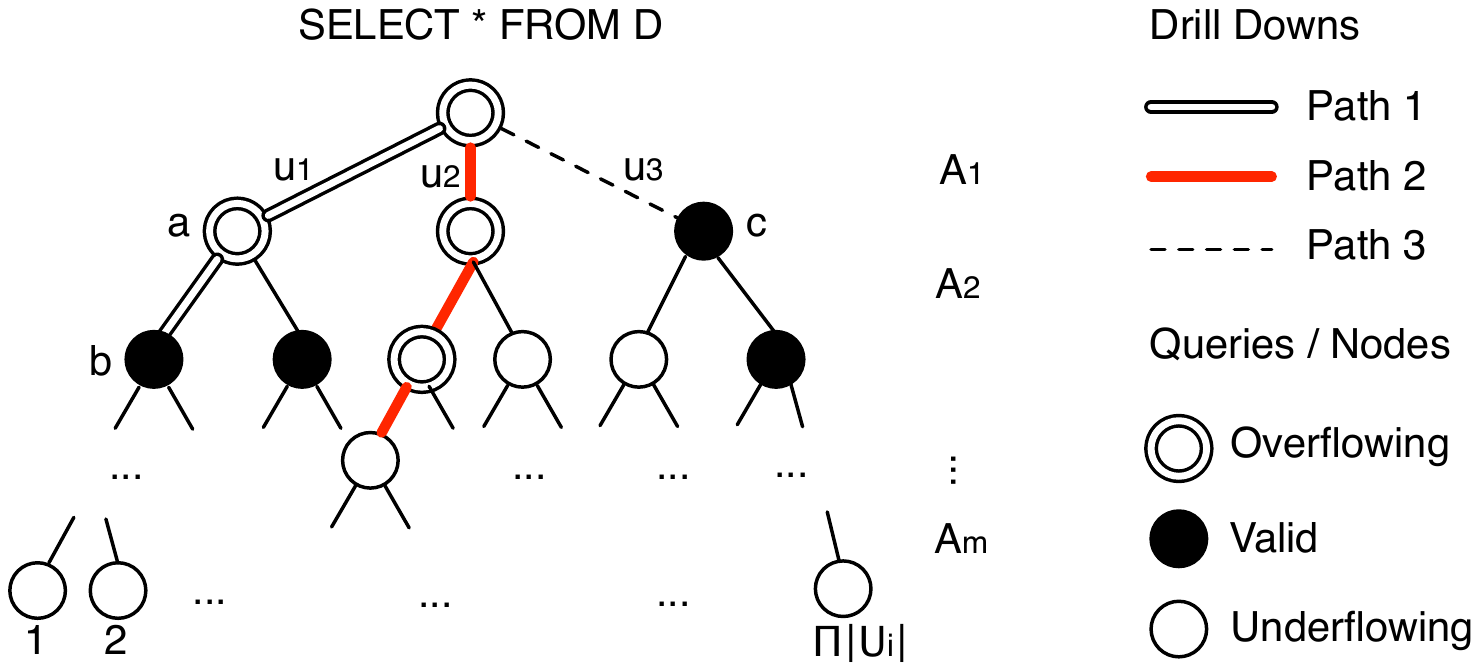}
\vspace{-3mm}
\caption{Query Tree}
\label{fig:qtr}
\vspace{-3mm}
\end{figure}

Given the query tree structure, existing single-round aggregate estimation techniques perform a simple {\em drill down} process that can be understood as follows: First, choose a leaf node No.~$j$ uniformly at random from $[1, \prod^m_{i=1} |U_i|]$. Then, find the path from root to Leaf No.~$j$ and issue each node (i.e., query) on the path one by one from top to bottom (i.e., {\em drill down}) until finding a node that does {\em not} overflow. For example, if we happen to choose Leaf No.~1 (i.e., Path 1) in the running example, we would issue three queries, the root, query a, and query b, and stop because query b is valid. As another example, if Leaf No.~$\prod^m_{i=1} |U_i|$ is chosen, we would find Path 3 and issue two queries, the root and query c.

This {\em top non-overflowing} query $q$ is then used to produce an unbiased estimation for a SUM or COUNT aggregate $Q$: SELECT AGG($f(t)$) FROM D WHERE {\em Selection Condition}. Specifically, the estimation is $\tilde{Q}$ = $Q(q)/p(q)$, where the numerator $Q(q)$ is the result of applying AGG over tuples returned by $q$ that satisfy the selection condition, and $p(q)$ is the ratio of leaf nodes that have $q$ on its path - i.e., the probability for $q$ to be issued during a drill down instance. For example, each of the three Level-2 nodes in the running example has $p(q) = 1/3$. Thus, the estimation of COUNT(*) produced by Path 3 is $3c$, where $c$ is the number of tuples returned by query c. Similarly, if Path 2 is taken, the aggregate estimation generated for any SUM or COUNT query will be 0 because the path terminates at an underflowing (i.e., empty) node.

It was proved in \cite{DJJ+10} that such an estimation is {\em unbiased}. The reason is easy to understand: Each tuple is returned by one and only one top non-overflowing query in the tree. In other words, for any SUM or COUNT aggregate $Q$ (with or without selection conditions), the sum of $Q(q)$ for all top non-overflowing queries is always $Q$. As such, the expected value of $Q(q)/p(q)$ is $\sum_q p(q) \cdot Q(q)/p(q) = Q$ (i.e., it is an unbiased estimator). To reduce estimation variance, one can repeatedly perform multiple drill downs (until exhausting the database-imposed query limit $G$) and use the average output as the final estimation.

\noindent{\bf Query Reissuing: } One can see from the simple model that the randomness of each drill-down can be captured by a single random number $r$ - i.e., sequence number of the leaf-level node corresponding to the drill-down. As such, each execution of the single-round algorithm can be uniquely identified by a ``{\em signature}'' set $\mathcal{S} = \{r_1, \ldots, r_h\}$ where each $r_i$ defines one drill-down performed.

With this notation, a key difference between restarting and reissuing can be stated as follows. With the restarting baseline, at each round, say round $t$, we randomly generate a signature set $\{r^t_1, \ldots, r^t_s\}$ (with a sufficiently large $s$), and perform drill-downs $r^t_i$ in an increasing order of $i$ until exhausting the query limit. With reissuing, on the other hand, we generate $\{r^t_1, \ldots, r^t_s\}$ only once when $t = 1$, and then reuse the same set for every subsequent round (i.e., with query reissuing, $\forall t, t^\prime, i$, there is $r^t_i = r^{t^\prime}_i$).

\noindent{\bf Saving of Query Cost: } To understand why reusing the same signature set can lead to a significant saving of query cost, note that for each drill-down $r^1_i$, before performing it at Round $\mathcal{R}_j$, we already know substantial information about it from the previous rounds. Specifically, we know that at Round $\mathcal{R}_{j-1}$, which query $q$ is the top non-overflowing node for $r^1_i$. Thus, for Round $\mathcal{R}_j$, we can start our drill-down from $q$ instead of the root node, knowing that all queries above $q$ are likely to overflow. In the running example, this means to start from query b instead of the root for ``updating'' Path 1 in a new round. After issuing $q$, we can drill down further if $q$ overflows, ``roll up'' to the first non-underflowing node if $q$ underflows, and, if $q$ is valid, directly generate an estimation. One can see that this may lead to a significant saving of query cost, especially when the database undergoes little change. For example, in the running example, if no change happens to the database, we can save 2 out of 3 queries (i.e., root and query a) for updating Path 1.

It is important to note that the saving of query cost can be directly translated to more accurate aggregate estimations. In particular, since updating a drill-down may consume fewer queries, the remaining query budget (after updating all previous drill downs) can be used to initiate new drill downs, increasing the number of drill downs and thereby reducing the estimation variance and error. Algorithm~\ref{alg:reissueEst} (at the end of this section) depicts the pseudocode for REISSUE-ESTIMATOR.

\noindent{\bf Unbiasedness of Aggregate Estimations:} An important property of REISSUE-ESTIMATOR is that it produces unbiased estimations on all COUNT and SUM aggregates with or without selection conditions (as defined in \S\ref{sec:oae}). This is shown in the following theorem.

\begin{theorem} \label{thm:ubr}
At any round, for a given SUM or COUNT aggregate $Q$, we have $E(\pi(r_i)) = Q$, where $\pi(r_i)$ is the estimation produced by REISSUE-ESTIMATOR from drill down $r_i$, and $E(\cdot)$ is the expected value taken over the randomness of $r_i$.
\end{theorem}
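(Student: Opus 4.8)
The plan is to reduce the unbiasedness of REISSUE-ESTIMATOR at round $\mathcal{R}_j$ to the already-established unbiasedness of the static estimator of \cite{DJJ+10}, but applied to the current state $D_j$. The central observation I would build on is that the estimate $\pi(r_i)$ is a \emph{deterministic} function of the signature $r_i$ and the current state $D_j$ alone: although reissuing starts its drill-down from the top non-overflowing query recorded in the previous round (rather than from the root), the node it ultimately outputs depends only on $D_j$ and the path indexed by $r_i$, not on the history. Once this is established, the expectation over the uniform randomness of $r_i$ coincides with that of a from-scratch drill-down on $D_j$, and the static argument carries over verbatim, yielding the current-round aggregate $Q$ (i.e., $Q$ evaluated on $D_j$).

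First I would establish the monotonicity property that underpins everything: along any root-to-leaf path, the number of matching tuples is non-increasing as we descend, because each child query is strictly more specific (it appends one conjunctive predicate), so $Sel(\cdot)$ of a child is a subset of $Sel(\cdot)$ of its parent. Consequently, on each path there is a well-defined \emph{top non-overflowing node} $q^*$ — the shallowest node whose match count is at most $k$ — with every node strictly above it overflowing and every node at or below it not overflowing. This makes $q^*$ unique and independent of how it is located.

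Next I would prove the key lemma: the reissuing search, initialized at the previous round's top non-overflowing node, always terminates at the $q^*$ defined with respect to $D_j$. Here I would use monotonicity case by case. If the starting node overflows under $D_j$, then $q^*$ lies strictly below it and drilling down reaches it; if the starting node is non-overflowing but its parent has ceased to overflow, then $q^*$ lies strictly above it and rolling up reaches it; otherwise the starting node is itself $q^*$. In every case the procedure returns exactly the node that a full root-to-bottom drill-down on $D_j$ would return, so $\pi(r_i) = Q(q^*)/p(q^*)$ with both $Q(\cdot)$ and $q^*$ evaluated on $D_j$ — precisely the static estimator's output for leaf $r_i$ on $D_j$. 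I would stress that $p(q^*)$, the fraction of leaves whose path contains $q^*$, is a purely structural quantity of the query tree and is therefore unaffected by reissuing.

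Finally I would invoke the static partition argument on $D_j$. Since $r_i$ is drawn uniformly over the $\prod_{i=1}^m |U_i|$ leaves and reusing it across rounds does not reshape its distribution, the probability that the top non-overflowing node selected by $r_i$ equals a particular node $q$ is exactly $p(q)$, the structural fraction of leaves lying beneath $q$. Grouping the leaves by this node then gives
\begin{align}
E(\pi(r_i)) = \sum_{q} p(q)\cdot \frac{Q(q)}{p(q)} = \sum_{q} Q(q) = Q,
\end{align}
where the sum ranges over the top non-overflowing nodes of $D_j$ and the final equality uses that each tuple of $D_j$ is returned by exactly one such node. I expect the main obstacle to be the key lemma: rigorously arguing that a search initialized from stale, history-dependent information nonetheless recovers the current round's top non-overflowing node in all cases, thereby decoupling \emph{where} reissuing looks (an efficiency matter) from \emph{what} it finds (the correctness matter on which unbiasedness rests).
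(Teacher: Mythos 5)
Your proposal is correct, and its closing computation \emph{is} the paper's entire proof: the paper establishes unbiasedness by grouping the uniformly random signatures according to their terminal node $q$, using $p(q) = |\{r : q(r)=q\}|/|S|$ to cancel the inverse-probability weight, and then invoking the fact that the result sets of the top non-overflowing nodes partition $D_j$ (no two such nodes can lie on one root-to-leaf path, and nodes on different paths have mutually exclusive predicates). Where you genuinely depart from the paper is your key lemma. The paper simply \emph{defines} $q(r_i)$ as ``the first non-overflowing query on the path from root to leaf No.~$r_i$'' in the current round, thereby assuming without argument that the reissue procedure --- which starts from the stale node recorded in a previous round --- actually terminates at that node. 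Your lemma closes exactly this gap: monotonicity of $|Sel(\cdot)|$ down each path makes the current-round top non-overflowing node unique, and your case analysis (drill down if the stale node now overflows, roll up if its parent has ceased to overflow, stay put otherwise) shows the search recovers it, so $\pi(r_i)$ is a deterministic function of $r_i$ and $D_j$ alone and the static argument transfers verbatim. Beyond being a cleaner separation of the efficiency question (where the search starts) from the correctness question (what it finds), your lemma surfaces a subtlety the paper's pseudocode glosses over: Algorithm~1 as literally written rolls up only when $q(r_j)$ underflows, whereas your case analysis correctly demands a roll-up whenever the parent no longer overflows, \emph{even if $q(r_j)$ itself is still valid}; without that, the terminal node need not be the current top non-overflowing node, the partition property fails, and unbiasedness with the structural weight $p(\cdot)$ breaks. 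Your reading matches the paper's stated intent in \S4 (each update queries the node \emph{and} its parent), and it is the version under which the theorem is actually true.
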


\begin{proof}
Note from the design of REISSUE-ESTIMATOR that the COUNT(*) estimation produced by $r_i$ is
\begin{align}
\pi(r_i) = \frac{|q(r_i)|}{p(q(r_i))}
\end{align}
where $q(r_i)$ is the first non-overflowing one on the path from root to leaf No.~$r_i$, $|\cdot|$ is the number of tuples it returns, and $p(q(r_i))$ is the ratio (within all possible signatures) of signatures for which $q(r_i)$ appears on the path from root to leaf. Let $S$ be the set of all possible signatures. We have
\begin{align}
E(\pi(r_i)) &= \sum_{r \in S} \frac{1}{|S|} \cdot \frac{|q(r)|}{p(q(r))}\\
&= \sum_{q | \exists r \in S, q(r) = q} \frac{|\{r : r \in S, q(r) = q\}|}{|S|} \cdot \frac{|q|}{p(q)}\\
&= \sum_{q | \exists r \in S, q(r) = q} \frac{|\{r : r \in S, q(r) = q\}|}{|S|} \cdot \nonumber \\
& \hspace{10mm} \frac{|q| \cdot |S|}{|\{r : r \in S, q(r) = q\}|}\\
&= \sum_{q | \exists r \in S, q(r) = q} |q| = |D_j|. \label{equ:crs}
\end{align}
The critical step (\ref{equ:crs}) holds because each tuple belongs to one and only one $q$. To understand why, note that no two $q$ can exist on the same path from root to leaf (otherwise at least one is not the first non-overflowing node on the path). The tree definition ensures that any such two $q$ must have mutually exclusive (selection conditions on) values. Thus, each tuple belongs to one and only one $q$.
\end{proof}

Note that similar to the static case, REISSUE-ESTIMATOR produces (slightly) biased estimations for AVG queries, because the ratio between unbiased SUM and COUNT estimators may not be unbiased.

\subsection{Key Question: Reissue or Restart?}

While the idea of query reissuing likely saves queries, it is important to understand another implication of reissuing - the characteristics of aggregate estimations also changes. Specifically, one can see that, with the reissuing method, estimations produced in different rounds are {\em no longer independent} of each other due to the reuse of the same signature set (of drill-downs). We now discuss whether such change of characteristics can lead to more (or less) accurate estimations for various types of aggregates when the database undergoes different degrees of change.

\subsubsection{Motivating Examples} \label{sec:moe}

To illustrate the fact that reissue and restart may each perform better in different cases, we start by discussing two extreme-case scenarios: (1) {\em no change} - the database never changes over time, and (2) {\em total change} - the database is randomly (re-)generated from a given distribution at every round. In both cases, we consider the estimation of a single-round aggregate SELECT COUNT(*) FROM $D_i$, which we denote as $|D_i|$, and a trans-round aggregate (SELECT COUNT(*) FROM $D_i$) $-$ (SELECT COUNT(*) FROM $D_{i-1}$), which we denote as $|D_i| - |D_{i-1}|$.

\noindent{\bf Example 1 (No change):} First, consider the case when the database never changes round after round. One can see that, for updating each drill down, the queries issued by REISSUE-ESTIMATOR are always a subset of those issued by RESTART-ESTIMATOR. Since such a saving of query cost allows REISSUE-ESTIMATOR to initiate new drill downs, the expected number of drill downs it performs (i.e., updates or initiates) in Round $\mathcal{R}_i$ is higher than that of RESTART-ESTIMATOR. This leads to a smaller variance and therefore a smaller error for estimating $|D_i|$.

For $|D_i| - |D_{i-1}|$, the error of REISSUE-ESTIMATOR is even smaller because every estimation generated from updating a previous drill down is exactly 0 and therefore has no error at all. If REISSUE-ESTIMATOR updates all $h_1$ previous drill downs and initiates $h_2$ new ones, then the estimation variance is $\sigma^2 \cdot h_2 / (h_1 \cdot (h_1 + h_2))$, where $\sigma^2$ is the variance of estimation produced by a single drill down.  Note that this is significantly smaller than $\sigma^2 / h + \sigma^2 / h^\prime$, the variance produced by RESTART-ESTIMATOR after performing $h$ drill downs in Round $\mathcal{R}_{i-1}$ and $h^\prime$ ones in Round $\mathcal{R}_i$. For example, even in a conservative scenario when $h_1 = h = h^\prime$ (note that there is often $h_1 \gg \max(h, h^\prime)$ because of the queries saved by reissuing), the variance produced by REISSUE-ESTIMATOR is at most 50\% of RESTART-ESTIMATOR, regardless of the value of $h_2$.


\noindent{\bf Example 2 (Total change):} When the database is completely regenerated at each round,  accuracy analysis is straightforward because estimations produced by reissuing and restarting a drill down are statistically indistinguishable. Thus, the only factor to consider is the query cost per drill down - the fewer an algorithm consumes the better, regardless of which aggregate is to be estimated.

For query cost analysis, we first note the existence of data distributions (according to which the database is regenerated at each round) with which REISSUE ends up consuming more queries than RESTART. For example, consider a distribution that guarantees if a tuple $t$ exists in the database, there is always $t^\prime$ which shares the same value with $t$ on attributes $A_1, \ldots, A_{m-1}$ but different ones on $A_m$. With this distribution, when $k = 1$, a drill down can terminate at Level-$(m - 1)$ of the tree. If we try to update this drill down, however, we start at Level-$(m-1)$ but might have to travel all the way up to the root to find a non-empty query in the new round because the database has been completely regenerated.  This leads to a significantly higher query consumption than a brand new drill down starting from the root. In other words, REISSUE-ESTIMATOR might end up performing worse than RESTART-ESTIMATOR.

It is important to note that there are also cases where REISSUE-ESTIMATOR is better. For example, consider a Boolean database with $n = 2^{m/2}$ tuples, each attribute of which is generated i.i.d.~with uniform distribution. With REISSUE-ESTIMATOR, the expected level from which a (reissued) drill down starts is Level-$m/2$, while RESTART-ESTIMATOR has to start with the root and consumes more queries (in an expected sense, when $m \gg 1$) per drill down.

\subsubsection{Theoretical Analysis} \label{sec:ta}

The above examples illustrate that the performance comparison between REISSUE- and RESTART-ESTIMATOR does not have a deterministic answer. Rather, it depends on the data distribution, how it changes over time, and also the aggregate to be estimated. We now develop theoretical analysis that indicates under which conditions will reissue or restart have a better performance.

Note that, given a previous drill down which terminates at query $q$, there are three types of updates REISSUE-ESTIMATOR might encounter in the new round:

\begin{enumerate}
\item If $q$ neither overflows nor underflows in a new round, then it is the only query REISSUE-ESTIMATOR issues to update the drill down - i.e., all queries on the path from the root to $q$ will be saved (compared with RESTART-ESTIMATOR).

\item If $q$ overflows in the new round, then REISSUE-ESTIMATOR starts from $q$ and drills down until reaching a valid or underflowing query. It still saves all queries on the path from the root to $q$ as compared with RESTART-ESTIMATOR.

\item Finally, if $q$ now underflows, REISSUE-ESTIMATOR has to trace upwards, along the path from $q$ to the root node, until finding a node that either overflows or is valid, say $q_\mathrm{t}$. In this case, the queries REISSUE-ESTIMATOR issues (and wastes) are on the path between $q$ and $q_\mathrm{t}$.
\end{enumerate}

One can see that the only case where REISSUE-ESTIMATOR {\em might} require more queries than RESTART-ESTIMATOR is Case (3). Thus, the worst-case scenario for REISSUE-ESTIMATOR is when $n_\mathrm{i} = 0$, i.e., when no tuple is inserted into the database. This worst-case scenario results in the following theorem.

\begin{theorem} \label{thm:nth}
After removing $n_\mathrm{d}$ randomly chosen tuples from an $n$-tuple database, for any SUM or COUNT query, the standard error of estimation produced by REISSUE-ESTIMATOR in the new database, denoted by $s_\mathrm{I}$, satisfies
\begin{align}
s_\mathrm{I} \leq \left(1 - \frac{n_\mathrm{d}}{n}\right) \cdot \sqrt{\frac{2 \max_{i \in [1, m]} (\log |U_i|) }{\log n - \log k} + \left(\frac{n_\mathrm{d}}{n}\right)^{k+1}} \cdot s_\mathrm{S}, \label{equ:t1m}
\end{align}
where $s_\mathrm{S}$ is the standard error of estimation produced by RESTART-ESTIMATOR in the old database, $|U_i|$ is the domain size of attribute $A_i$ ($i \in [1, m]$), and $k$ is as in top-$k$ interface. Specifically, when $n$ is sufficiently large to ensure an expected drill-down depth of at least 2 for RESTART-ESTIMATOR over the old database, we have $s_\mathrm{I} < s_\mathrm{S}$.
\end{theorem}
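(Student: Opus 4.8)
The plan is to write each standard error as a per-drill-down standard deviation divided by the square root of the number of drill-downs that the per-round budget $G$ affords, and then compare the two estimators factor by factor. Writing $s_\mathrm{S}^2 = \sigma_\mathrm{S}^2 / h_\mathrm{S}$ and $s_\mathrm{I}^2 = \sigma_\mathrm{I}^2 / h_\mathrm{I}$, where $h = G/c$ and $c$ is the expected query cost of one (re)issued drill-down, the target ratio factors as $s_\mathrm{I}^2/s_\mathrm{S}^2 = (\sigma_\mathrm{I}^2/\sigma_\mathrm{S}^2)\cdot(c_\mathrm{I}/c_\mathrm{S})$. The one structural fact I would establish first is that, since $n_\mathrm{i}=0$ (deletions only), a query $q$ that was top-non-overflowing in the old database can never overflow in the new one: deleting tuples only shrinks $|Sel(q)|$. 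Hence Case (2) from \S\ref{sec:ta} is impossible, and every reissued drill-down either stays at $q$ (Case 1) or finds $q$ empty and rolls up (Case 3) -- this dichotomy is what keeps $c_\mathrm{I}$ small and drives the whole bound.

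Next I would bound the cost ratio. For the denominator, I would lower-bound $c_\mathrm{S}$ by the expected drill-down depth: a drill-down follows a uniformly random root-to-leaf path, so the expected number of tuples matching the level-$\ell$ query is $n/\prod_{i=1}^{\ell}|U_i|$, and the drill-down terminates at the first level $L$ where this drops to $k$. This gives $\sum_{i=1}^{L}\log|U_i|\approx \log n-\log k$ and therefore $c_\mathrm{S}\ge L\ge (\log n-\log k)/\max_i\log|U_i|$. For the numerator, REISSUE spends one query probing $q$; if $q$ is empty it rolls up, and the roll-up stops at the parent of $q$ unless the parent is also empty. Because the parent overflowed in the old database it held at least $k+1$ tuples, so emptying it requires deleting all of them, an event of probability at most $(n_\mathrm{d}/n)^{k+1}$. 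Thus in the overwhelming majority of cases $c_\mathrm{I}\le 2$, which already yields the first bracketed term $2\max_i\log|U_i|/(\log n-\log k)$.

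The variance ratio is where the remaining factors come from, and I would handle it by coupling the new-database estimator $Y$ (from REISSUE) to the old-database estimator $X$ (from RESTART) through the same random deletion. Conditioned on the drill-down still terminating at $q$, each of the $|Sel(q)|$ tuples survives independently with probability $1-n_\mathrm{d}/n$, so $E[Y\mid\text{path}] = (1-n_\mathrm{d}/n)\,|Sel(q)|/p(q)$; this is what contributes the global $(1-n_\mathrm{d}/n)^2$ factor to $\sigma_\mathrm{I}^2$ (and also reflects that the estimand itself has shrunk from $n$ to $n(1-n_\mathrm{d}/n)$). I would then apply the law of total variance, splitting on the rare deep-roll-up event (the parent also empty, probability $\le (n_\mathrm{d}/n)^{k+1}$): the ``shallow'' branch inherits the $(1-n_\mathrm{d}/n)^2$-scaled variance multiplied by the cost ratio above, while the ``deep'' branch contributes the additive $(1-n_\mathrm{d}/n)^2(n_\mathrm{d}/n)^{k+1}$ penalty. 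Assembling the two branches gives exactly~(\ref{equ:t1m}). Finally, for the closing claim, note that an expected depth of at least $2$ means $\log n-\log k\ge 2\max_i\log|U_i|$, so the first bracketed term is $\le 1$; combined with $(1-n_\mathrm{d}/n)^2<1$ and the negligible second term, the right-hand side is strictly below $s_\mathrm{S}$, giving $s_\mathrm{I}<s_\mathrm{S}$.

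The hard part will be the variance coupling in the third step. Three sources of correlation have to be controlled simultaneously: reusing one signature set makes the per-round drill-downs dependent, sampling $n_\mathrm{d}$ deletions without replacement correlates the survival of tuples across different queries, and a roll-up changes which query is the terminal one (so $Y$ and $X$ are defined on different nodes). Showing that, despite these dependencies, the two contributions telescope precisely into the sum inside the square root -- rather than into a looser product or an uncontrolled cross term -- is the crux; everything else (the depth estimate and the $(n_\mathrm{d}/n)^{k+1}$ emptiness probability) is routine.
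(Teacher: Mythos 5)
Your proposal follows the paper's own proof essentially step for step: the same factorization of the squared standard error into (per-drill-down variance) $\times$ (per-drill-down cost) over the budget, the same case analysis reducing REISSUE's extra cost to the rare event that an overflowing parent empties out (probability at most $(n_\mathrm{d}/n)^{k+1}$, since all of its $\geq k+1$ tuples must be deleted), the same $(1-n_\mathrm{d}/n)^2$ scaling of the per-drill-down variance, and the same depth lower bound $E[c_\mathrm{S}] \geq (\log n - \log k)/\max_i \log |U_i|$ for RESTART. The only differences are bookkeeping: the paper produces the additive $(n_\mathrm{d}/n)^{k+1}$ term by bounding the deep roll-up's cost by the old drill-down depth $\ell(q)$ inside $E[c_\mathrm{I}] \leq 2 + E[c_\mathrm{S}]\cdot(n_\mathrm{d}/n)^{k+1}$ (rather than via your law-of-total-variance split, which conflates cost with variance but lands on the same expression), and it closes the final strict inequality with the explicit algebra $\left(1-\tfrac{n_\mathrm{d}}{n}\right)^2\left(1+\left(\tfrac{n_\mathrm{d}}{n}\right)^{k+1}\right) < \left(1-\tfrac{n_\mathrm{d}}{n}\right)\left(1+\tfrac{n_\mathrm{d}}{n}\right) \leq 1$ rather than a negligibility argument.
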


\noindent {\bf Proof.} 

\begin{proof}

We start by defining a set of {\em top-non-overflowing nodes} $\Lambda$.  Specifically, for each possible drill down path $r$, we find the highest-level (i.e., closest to root) node on $r$, say $q_r$, that does not overflow in the old round.  We then add $q_r$ to $\Lambda$.  One can see that, for every possible drill down path $r$, $\Lambda$ contains exactly one node on $r$. Thus, the expected number of queries issued by RESTART-ESTIMATOR in the old round is
\begin{align}
E[c_\mathrm{S}] = \sum_{q \in \Lambda} p(q) \cdot \ell(q). \label{equ:csv}
\end{align}
where $E[\cdot]$ stands for the expected value, $p(q)$ is the sum of selection probability for all drill down paths that pass through $q$ (note that such probability does not change for each round), and $\ell(q)$ is the expected number of queries issued by (i.e., expected depth of) all drill downs that pass through $q$ in the old round.

We now consider the query cost for REISSUE-ESTIMATOR in the new round. Specifically, we considers a subset of nodes in $\Lambda$, say $\Lambda^\prime$, which satisfy the following property: for each $q \in \Lambda^\prime$, the parent of $q$, say $\pi(q)$, changes from overflowing to underflowing in the transition.  The expected number of queries REISSUE-ESTIMATOR needs to update a drill down in the new round satisfies
\begin{align}
E[c_\mathrm{I}] \leq 2 + \sum_{q \in \Lambda^\prime} (p(q) \cdot \ell(q)). \label{equ:civ}
\end{align}

The reason can be stated as follows. First, observe from the definition of $\Lambda^\prime$ that, if Case (3) happens and REISSUE-ESTIMATOR issues {\em more than} 2 queries, the parent of $q$ must underflow because otherwise we would have stopped after issuing $q$ and the parent of $q$. This means that $q \in \Lambda^\prime$. Also note that the number of queries issued by REISSUE-ESTIMATOR in this case is at most $\ell(q)$, while in all other cases the query cost for updating a drill down is either 1 (Cases 1 and 2) or 2 (Case 3, the parent of $q$ does not underflow). Thus, we have Inequality (\ref{equ:civ}).

Let $\delta(q)$ be the probability for a top-non-overflowing query $q \in \Lambda$ to fall into $\Lambda^\prime$. According to (\ref{equ:civ}) and (\ref{equ:csv}) we have
\begin{align}
E[c_\mathrm{I}] \leq 2 + E[c_\mathrm{S}] \cdot \max_{q \in \Lambda} \delta(q).
\end{align}
Note that, in order for an overflowing query $q \in \Lambda$ to underflow in the new round, all tuples (at least $k + 1$ of them) returned by $q$ must be removed during the transition. Thus, $\forall q \in \Lambda$,
\begin{align}
\delta(q) \leq \left(\frac{n_\mathrm{d}}{n}\right)^{k+1}
\end{align}
In other words,
\begin{align}
E[c_\mathrm{I}] \leq 2 + E[c_\mathrm{S}] \cdot \left(\frac{n_\mathrm{d}}{n}\right)^{k+1}.
\end{align}

Note that for any SUM or COUNT query, $\sigma^2_\mathrm{I}$, the variance of estimation generated by a drill down updated by REISSUE-ESTIMATOR in the new round, and $\sigma^2_\mathrm{S}$, the variance of estimation generated by a drill down in the old round, satisfy $\sigma^2_\mathrm{I} \leq (1 - n_\mathrm{d}/n)^2 \cdot \sigma^2_\mathrm{S}$, for the simple reason that the expected query answer over each node (i.e., why applying the aggregate over all tuples returned by the node) in the new round is $1 - n_\mathrm{d}/n$ of its old value. As such, for a given query budget of $b$, the variance of final estimation produced by REISSUE-ESTIMATOR in Round 2 satisfies
\begin{align}
s^2_\mathrm{I} &\leq \frac{(1 - n_\mathrm{d}/n)^2 \cdot \sigma^2_\mathrm{S} \cdot (2 + E[c_\mathrm{S}] \cdot (n_\mathrm{d}/n)^{k+1})}{b}\\
&= \frac{2 \cdot (1 - n_\mathrm{d}/n)^2 \cdot s^2_\mathrm{S}}{E[c_\mathrm{S}]} + \left(1 - \frac{n_\mathrm{d}}{n}\right)^2 \cdot \left(\frac{n_\mathrm{d}}{n}\right)^{k+1} \cdot s^2_\mathrm{S} \label{equ:si1}\\
&\leq \left(1 - \frac{n_\mathrm{d}}{n}\right)^2 \cdot \left(\frac{2 \max_{i \in [1, m]} (\log |U_i|) }{\log n - \log k} + \left(\frac{n_\mathrm{d}}{n}\right)^{k+1}\right) \cdot s^2_\mathrm{S} \label{equ:si2}
\end{align}
Here the derivation from (\ref{equ:si1}) to (\ref{equ:si2}) is due to the following inequality 
\begin{align}
E[c_\mathrm{S}] \geq \log (n/k) / \log (\max_{i \in [1, m]} |U_i|),
\end{align}
which holds because for an $n$-tuple database with a top-$k$ interface and a maximum attribute
domain size of $\max|U_i|$, the average length of a drill down is at least $\log(n/k)/\log(\max|U_i|)$.
 As such, we have (\ref{equ:t1m}).

Note that when $n$ is sufficiently large to ensure $E[c_\mathrm{S}] \geq 2$, we have
\begin{align}
s^2_\mathrm{I} &\leq \left(1 - \frac{n_\mathrm{d}}{n}\right)^2 \cdot \left(1 + \left(\frac{n_\mathrm{d}}{n}\right)^{k+1}\right) \cdot s^2_\mathrm{S}\\
&< \left(1 - \frac{n_\mathrm{d}}{n}\right) \cdot \left(1 + \frac{n_\mathrm{d}}{n}\right) \cdot s^2_\mathrm{S}\\
&= \left(1 - \left(\frac{n_\mathrm{d}}{n}\right)^2\right) \cdot s^2_\mathrm{S} \leq s^2_\mathrm{S}.
\end{align}
Thus, $s_\mathrm{I}$ is always smaller than $s_\mathrm{S}$ in this case.

\end{proof}


\subsection{Algorithm REISSUE-ESTIMATOR} \label{sec:are}
In this subsection, we put all the previous discussions together to develop Algorithm REISSUE-ESTIMATOR. We then briefly describe how it can be extended to handle other aggregates and selection conditions. Algorithm~\ref{alg:reissueEst} depicts the pseudocode. We use the notation $q(r_i)$ to represent the first non overflowing query for drill down $r_i$ and let $\mathcal{Q} = \{q(r_1), \ldots, q(r_s)\}$. $parent(q(r_i))$ corresponds to parent of $q(r_i)$ in the query tree.

\begin{algorithm}[!htb]
\caption{{\bf REISSUE-ESTIMATOR}}
\begin{algorithmic}[1]
\label{alg:reissueEst}
\STATE Randomly generate signature set $\mathcal{S}=\{r_1, \ldots, r_s\}$
\STATE In $\mathcal{R}_1$, perform drill downs in $\mathcal{S}$ till exhausting query budget. Update $\mathcal{Q}$.
\FOR {each round $\mathcal{R}_i$}
\FOR {drill-down $r_j \in \mathcal{S}$}
\IF {$q(r_j)$ overflows}
\STATE Do drill-down from $q(r_j)$ till a non-overflowing node
\ELSIF{$q(r_j)$ underflows}
\STATE Do roll-up from $q(r_j)$ till a non-underflowing node or an underflowing node with an overflowing parent
\ENDIF
\ENDFOR
\STATE Issue new drill-downs from $\mathcal{S}$ for remaining query budget
\STATE Produce aggregate estimation according to $\mathcal{Q}$
\ENDFOR
\end{algorithmic}
\end{algorithm}

%
%
%
\vspace{1mm}
\noindent {\bf Aggregate Estimation with Selection Conditions:} So far, we have focused on aggregate queries that select all tuples in the database. To support aggregates with selection conditions - e.g., the number of science fiction books in Amazon - REISSUE-ESTIMATOR only needs a simple modification. Recall that a selection condition is specified via conjunctive constraints over a subset of attributes. Given the selection conditions $Q$, we simply alter our query tree to be the subtree corresponding to $Q$. In the Amazon example, we construct a query tree with all queries containing predicate Type = science-fiction. With the new query tree, drill-downs can be computed directly over it to a get an unbiased estimate.


\section{RS-ESTIMATOR} \label{sec:sam}

In this section, we start by discussing the problem of REISSUE-ESTIMATOR, especially its less-than-optimal performance when the database undergoes little change at a round. Then, we introduce our key ideas for addressing the problem, and present the detailed description of RS-ESTIMATOR.

\subsection{Problem of REISSUE-ESTIMATOR}

To understand the problem of REISSUE-ESTIMATOR, consider an extreme-case scenario where the database does not change round after round. One can see that, since updating each drill down requires exactly two queries (one query to verify if the result is same and one query over its parent to determine if it is still the top non-overflowing query), the number of drill downs that can be updated is at most $G/2$, where $G$ is the total query budget per round. This essentially places a lower bound on the estimation variance achieved by REISSUE-ESTIMATOR for many queries - e.g., when the aggregate query to be estimated is COUNT(*), the estimation variance has a lower bound of $2\sigma^2/G$, where $\sigma^2$ is the estimation variance produced by a single drill down (for the COUNT(*) query), no matter how many rounds have passed since the database last changed.

Note that this lower bound indicates a significant waste of queries by REISSUE-ESTIMATOR. In particular, it means that after a sufficient number of rounds, issuing more queries ($G$ per round for whatever number of rounds) will not further reduce the estimation error. Intuitively, it is easy to see that solutions exist to reduce this waste. Specifically, one does not need to issue many queries (i.e., update many drill downs) before realizing the database has changed little, and therefore reallocate the remaining query budget to initiate new drill downs (which {\em will} further reduce the estimation error).

We note that this intuition has indeed been used in statistics - e.g., in the design of reservoir sampling. With reservoir sampling \cite{vitter85}, how much change should happen to the sample being maintained depends on how much incoming data are inserted to the database. Similarly, we can determine the number of drill downs to be updated based on an understanding of how much change has occurred to the underlying database.

Unfortunately, reservoir sampling has two requirements which we do have the luxury to satisfy over a web database: (1) it requires the database to be insertion-only, while web databases often feature both insertions and deletions, and (2) reservoir sampling needs to know {\em which tuples} have been inserted. In our case, it is impossible to know what changes occurred to the database without issuing an extremely large number of queries to essentially ``crawl'' the database. Next, we shall describe our main ideas for solving these problems and enable a reservoir-like estimator.

\subsection{Key Ideas of RS-ESTIMATOR} \label{sec:kir}

\vspace{1mm}
\noindent {\bf Overview of RS-ESTIMATOR: } The key idea of RS-ESTIMATOR is to distribute the query budget available for each round into two parts: one for reissuing (i.e., updating) drill downs from previous rounds, and the other for initiating new drill downs. How the distribution should be done depends on how much changes have occurred - intuitively, the smaller the aggregate (to be estimated) changes, the fewer queries should RS-ESTIMATOR use for reissuing and the more should it use for new drill downs.

Thus, in order to determine a proper query distribution, RS-ESTIMATOR first uses a small number of {\em bootstrapping} queries to estimate the amount of change to the aggregate. To illustrate how this can be done, consider an example where the aggregate to be estimated is COUNT(*), and we are at Round $\mathcal{R}_2$ with $h$ drill downs performed in Round $\mathcal{R}_1$. In the following discussion, we first briefly describe how RS-ESTIMATOR processes the two types of drill downs to estimate $|D_2|$ once the distribution is given, and then present theoretical analysis for the optimal distribution.
Algorithm~\ref{alg:rsEst} provides the pseudocode for RS-ESTIMATOR.

\vspace{1mm}
\noindent{\bf Aggregate Estimation from Two Types:} Let $h_1$ be the number of drill downs (among the $h$ drill downs $r_1, \ldots, r_h$ performed in Round $\mathcal{R}_1$) we update, and $h_2$ be the number of new ones we perform. Without loss of generality, let the signatures for these $h_1$ and $h_2$ drill downs be $r_1, \ldots, r_{h_1}$ and $r^\prime_1, \ldots, r^\prime_{h_2}$, respectively. With RS-ESTIMATOR, we first use the $h_1$ updated drill downs to estimate $|D_2| - |D_1|$. Specifically,
\begin{align}
\tilde{v}_\mathrm{c} = \frac{1}{h_1} \cdot \sum_{i \in [1, h_1]} \frac{|q_2(r_i)|}{p(q_2(r_i))} - \frac{|q_1(r_i)|}{p(q_1(r_i))},
\end{align}
where $q_1(r_i)$ and $q_2(r_i)$ is the top non-overflowing query for path $r_i$ in Round $\mathcal{R}_1$ and $\mathcal{R}_2$, respectively, and $p(q)$ represents the ratio within all possible signatures for which $q$ is the top non-overflowing query. In addition, RS-ESTIMATOR also uses the $h_2$ new drill downs to produce an estimation of $|D_2|$, i.e.,
\begin{align}
\tilde{v}_\mathrm{d} = \frac{1}{h_2} \cdot \sum_{i \in [1, h_2]} \frac{|q_2(r^\prime_i)|}{p(q_2(r^\prime_i))}.
\end{align}
Note that RS-ESTIMATOR now has two independent estimations for $|D_2|$: (1) $\tilde{v}_1 + \tilde{v}_\mathrm{c}$, where $\tilde{v}_1$ is the first-round estimation of $|D_1|$, and (2) $\tilde{v}_\mathrm{d}$. A natural idea is to generate the final estimation as a weighted sum of the two - i.e., $\tilde{v}_2 = w_1 \cdot (\tilde{v}_1 + \tilde{v}_\mathrm{c}) + (1-w_1) \cdot \tilde{v}_\mathrm{d}$, where $w_1 \in [0, 1]$. The following theorem establishes the unbiasedness of the estimation no matter what $w_1$ is.

\begin{theorem} \label{thm:uep}
If $\tilde{v}_1$ is an unbiased estimation of $|D_1|$, then
\begin{align}
E(\tilde{v}_1 + \tilde{v}_\mathrm{c}) = E(\tilde{v}_\mathrm{d}) = |D_2|.
\end{align}
where $E(\cdot)$ denotes expected value taken over the randomness of $r_1, \ldots, r_{h_1}, r^\prime_1, \ldots, r^\prime_{h_2}$.
\end{theorem}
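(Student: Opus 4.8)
The plan is to reduce both identities to the single–drill-down unbiasedness already established in Theorem~\ref{thm:ubr}, together with linearity of expectation. The one fact I would extract from that theorem is the following: for \emph{any} fixed database state $D$ and a single drill down $r$ drawn uniformly at random from the leaf nodes, the quantity $|q(r)|/p(q(r))$ is an unbiased estimator of the COUNT aggregate $|D|$, where $q(r)$ is the top non-overflowing node on path $r$ in $D$. Since the proof of Theorem~\ref{thm:ubr} used nothing about how $D$ was produced, I can apply it separately to $D_1$ and to $D_2$; that separability is the whole engine of the argument.

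First I would dispatch $E(\tilde{v}_\mathrm{d}) = |D_2|$. Here $\tilde{v}_\mathrm{d}$ is simply the average of $h_2$ drill-down estimates $|q_2(r^\prime_i)|/p(q_2(r^\prime_i))$ taken over the Round-$\mathcal{R}_2$ state $D_2$. Applying Theorem~\ref{thm:ubr} to $D_2$ gives $E[|q_2(r^\prime_i)|/p(q_2(r^\prime_i))] = |D_2|$ for each $i$, and averaging over $i \in [1, h_2]$ leaves $|D_2|$ unchanged.

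Next I would handle $E(\tilde{v}_1 + \tilde{v}_\mathrm{c}) = |D_2|$ by showing $E(\tilde{v}_\mathrm{c}) = |D_2| - |D_1|$ and then invoking the hypothesis $E(\tilde{v}_1) = |D_1|$. For each $i$, the ``new-round'' term $|q_2(r_i)|/p(q_2(r_i))$ has expectation $|D_2|$ by Theorem~\ref{thm:ubr} applied to $D_2$, while the ``old-round'' term $|q_1(r_i)|/p(q_1(r_i))$ has expectation $|D_1|$ by the same theorem applied to $D_1$; the point is that both $q_1(r_i)$ and $q_2(r_i)$ are determined by the same uniformly-random signature $r_i$, so each is a legitimate uniformly-chosen drill down over its respective round. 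Linearity then gives $E(\tilde{v}_\mathrm{c}) = |D_2| - |D_1|$, and adding $E(\tilde{v}_1) = |D_1|$ yields $|D_2|$.

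The subtle points I would be careful about are twofold, though neither obstructs \emph{unbiasedness}. First, $\tilde{v}_1$ and $\tilde{v}_\mathrm{c}$ are built from the same drill downs $r_1, \ldots, r_{h_1}$ and are therefore strongly correlated; this correlation affects the variance of $\tilde{v}_1 + \tilde{v}_\mathrm{c}$ (and hence the optimal weight $w_1$ analyzed later) but is irrelevant here, since linearity of expectation requires no independence. Second, and more importantly, I must verify that the update/roll-up procedure of Algorithm~\ref{alg:reissueEst} actually returns the \emph{top} non-overflowing node $q_2(r_i)$ on path $r_i$ in $D_2$ — i.e., exactly the node at which a fresh drill down from the root would have stopped — rather than some other node on the path. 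Only then does $|q_2(r_i)|/p(q_2(r_i))$ carry precisely the distribution assumed by Theorem~\ref{thm:ubr}. This is the one place where the reissuing mechanics, rather than pure algebra, enter the argument; granting it (it follows from the drill-down-then-roll-up logic), both identities follow immediately.
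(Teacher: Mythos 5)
Your proof is correct and matches the paper's approach: the paper's entire proof of Theorem~\ref{thm:uep} is the single remark that it ``follows in analogy to that of Theorem~\ref{thm:ubr},'' and your argument is precisely that analogy carried out — applying the single-drill-down unbiasedness of Theorem~\ref{thm:ubr} separately to $D_1$ and $D_2$ and combining via linearity of expectation. Your two cautionary observations (that the correlation between $\tilde{v}_1$ and $\tilde{v}_\mathrm{c}$ is irrelevant to unbiasedness, and that the roll-up mechanics must recover the true top non-overflowing node in $D_2$) are exactly the details the paper leaves implicit.
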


The proof follows in analogy to that of Theorem~\ref{thm:ubr}. Since $w_1$ does not affect the estimation bias, we should select its value so as to minimize the estimation variance. Specifically, the following theorem illustrates the optimal value of $w_1$.

\begin{theorem} \label{thm:ovv}
The variance of second-round estimation - i.e., $\tilde{v}_2 = w_1 \cdot (\tilde{v}_1 + \tilde{v}_\mathrm{c}) + (1-w_1) \cdot \tilde{v}_\mathrm{d}$ is
\begin{align}
w_1^2 \cdot (\frac{\sigma^2_\mathrm{c}}{h_1} + \frac{\sigma^2_1}{h}) + (1 - w_1)^2 \cdot \frac{\sigma^2_\mathrm{d}}{h_2}\label{equ:vrs}
\end{align}
where $\sigma^2_\mathrm{c}$ is the variance of estimation for $|D_2| - |D_1|$ from a single drill down, i.e., the variance of $|q_2(r_i)|/p(q_2(r_i)) - |q_1(r_i)|/p(q_1(r_i))$ for a random $r_i$, while $\sigma^2_1$ and $\sigma^2_d$ are the variance of estimation for $|D_1|$ and $|D_2|$ for a single drill down, respectively. This variance of $\tilde{v}_2$ takes the minimum value when
\begin{align}
w_1 = \frac{\sigma^2_\mathrm{d}/h_2}{\sigma^2_\mathrm{c}/h_1 + \sigma^2_1/h + \sigma^2_\mathrm{d}/h_2}.
\end{align}
\end{theorem}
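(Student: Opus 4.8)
The plan is to view $\tilde{v}_2$ as an inverse-variance-weighted average of two \emph{independent} unbiased estimators of $|D_2|$ and then optimize the scalar weight $w_1$. Define $A = \tilde{v}_1 + \tilde{v}_\mathrm{c}$ and $B = \tilde{v}_\mathrm{d}$. First I would establish $\mathrm{Cov}(A,B)=0$: the estimator $B$ depends only on the $h_2$ freshly drawn signatures $r^\prime_1,\ldots,r^\prime_{h_2}$, while $A$ depends only on the Round-1 signatures $r_1,\ldots,r_h$ (through $\tilde{v}_1$) and their reissued subset (through $\tilde{v}_\mathrm{c}$); since the new signatures are generated independently of the old ones, the two groups are independent. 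Hence
\begin{align}
\mathrm{Var}(\tilde{v}_2) = w_1^2\,\mathrm{Var}(A) + (1-w_1)^2\,\mathrm{Var}(B). \nonumber
\end{align}

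Next I would evaluate the two variances as variances of averages of i.i.d.\ per-drill-down quantities. Since $\tilde{v}_\mathrm{d}$ averages $h_2$ i.i.d.\ terms of variance $\sigma^2_\mathrm{d}$, we get $\mathrm{Var}(B)=\sigma^2_\mathrm{d}/h_2$. Likewise $\tilde{v}_\mathrm{c}$ averages $h_1$ i.i.d.\ change-estimators of variance $\sigma^2_\mathrm{c}$ and $\tilde{v}_1$ averages $h$ i.i.d.\ single-round estimators of variance $\sigma^2_1$, so $\mathrm{Var}(\tilde{v}_\mathrm{c})=\sigma^2_\mathrm{c}/h_1$ and $\mathrm{Var}(\tilde{v}_1)=\sigma^2_1/h$. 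Adding these to obtain $\mathrm{Var}(A)=\sigma^2_\mathrm{c}/h_1+\sigma^2_1/h$ and substituting into the display above reproduces the claimed variance (\ref{equ:vrs}).

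The optimal weight then follows from minimizing the convex quadratic $V(w_1)=w_1^2 a+(1-w_1)^2 b$ with $a=\sigma^2_\mathrm{c}/h_1+\sigma^2_1/h$ and $b=\sigma^2_\mathrm{d}/h_2$. The first-order condition $V^\prime(w_1)=2w_1 a-2(1-w_1)b=0$ gives $w_1=b/(a+b)$, which is precisely the stated expression, and $V^{\prime\prime}(w_1)=2(a+b)>0$ certifies it is the unique minimizer.

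The step that I expect to need the most care is the additivity $\mathrm{Var}(A)=\mathrm{Var}(\tilde{v}_1)+\mathrm{Var}(\tilde{v}_\mathrm{c})$, because $\tilde{v}_1$ and $\tilde{v}_\mathrm{c}$ share the reissued signatures $r_1,\ldots,r_{h_1}$ and are therefore correlated in general. Writing $Z_i$ and $W_i$ for the Round-1 and Round-2 per-path estimates, a direct computation gives $\mathrm{Cov}(\tilde{v}_1,\tilde{v}_\mathrm{c})=(\mathrm{Cov}(Z,W)-\sigma^2_1)/h$. This covariance vanishes exactly in the no-change regime (there $W_i\equiv Z_i$, so $\mathrm{Cov}(Z,W)=\sigma^2_1$ and also $\sigma^2_\mathrm{c}=0$) and stays small whenever the reissued answers change little; to obtain the clean formula (\ref{equ:vrs}) I would either invoke this small-change regime or model the realized first-round error $\tilde{v}_1-|D_1|$ as an independent prior error, decoupling it from the Round-2 sampling randomness.
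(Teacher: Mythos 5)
Your proposal is correct and follows essentially the same route as the paper's proof: split $\mathrm{Var}(\tilde{v}_2)$ across the two independent estimator groups, evaluate each piece as the variance of an average of i.i.d.\ per-drill-down estimates, and minimize the resulting convex quadratic in $w_1$ (the paper does exactly your first-order-condition computation, setting $\mathrm{d}\sigma^2_2/\mathrm{d}w_1 = 0$).

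The one substantive difference is your final paragraph, and it cuts in your favor. The paper's proof passes from $\mathrm{Var}(\tilde{v}_1 + \tilde{v}_\mathrm{c})$ to $\sigma^2_\mathrm{c}/h_1 + \sigma^2_1/h$ in a single unjustified step, i.e., it silently assumes $\mathrm{Cov}(\tilde{v}_1, \tilde{v}_\mathrm{c}) = 0$. As you observe, this is not automatic, because $\tilde{v}_\mathrm{c}$ is built from the same signatures $r_1,\ldots,r_{h_1}$ that enter $\tilde{v}_1$; with $Z_i$ and $W_i$ denoting the Round-1 and Round-2 per-path estimates, independence across distinct signatures gives
\[
\mathrm{Cov}(\tilde{v}_1, \tilde{v}_\mathrm{c}) = \frac{\mathrm{Cov}(Z, W) - \sigma^2_1}{h},
\]
which is exactly your expression and vanishes precisely when $\mathrm{Cov}(Z,W) = \sigma^2_1$ (e.g., the no-change regime, where $W \equiv Z$), but not in general. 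Consequently, the clean formula (\ref{equ:vrs}) holds exactly only under that implicit assumption (or under your alternative of modeling the realized first-round error as independent of the Round-2 sampling randomness); otherwise the variance carries the extra term $2 w_1^2 \left(\mathrm{Cov}(Z,W) - \sigma^2_1\right)/h$ and the optimal $w_1$ shifts accordingly. So your write-up is not merely a reproduction of the paper's argument: it makes explicit, and correctly quantifies, a covariance term that the paper's own proof never acknowledges.
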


\begin{proof}
We start the proof by deriving the variance of $\tilde{v}_2$ as follows.
\begin{align}
\sigma^2_2 & = Var (\tilde{v}_2)\\
&= Var (w_1 \cdot (\tilde{v}_1 + \tilde{v}_\mathrm{c}) + (1-w_1) \cdot \tilde{v}_\mathrm{d})\\
&= w_1^2 \cdot Var (\tilde{v}_1 + \tilde{v}_\mathrm{c}) + (1-w_1)^2 \cdot Var (\tilde{v}_\mathrm{d})\\
&= w_1^2 \cdot (\frac{\sigma^2_\mathrm{c}}{h_1} + \frac{\sigma^2_1}{h}) + (1 - w_1)^2 \cdot \frac{\sigma^2_\mathrm{d}}{h_2}\\
&= w_1^2 \cdot (\frac{\sigma^2_\mathrm{c}}{h_1} + \frac{\sigma^2_1}{h} + \frac{\sigma^2_\mathrm{d}}{h_2}) - 2w_1 \cdot \frac{\sigma^2_\mathrm{d}}{h_2} +  \frac{\sigma^2_\mathrm{d}}{h_2}
\end{align}
One can see from the above derivation the correctness of (\ref{equ:vrs}). 

We now derive the optimal value of $\omega_1$ which leads to the minimum value of $\sigma^2_2$. Note that the minimum value of $\sigma^2_2$ is taken when $\mathrm{d} \sigma^2_2 / \mathrm{d} \omega_1 = 0$. Thus, we start by deriving $\mathrm{d} \sigma^2_2 / \mathrm{d} \omega_1$ as follows.
\begin{align}
\frac{\mathrm{d} \sigma^2_2}{\mathrm{d} w_1}  &= 2w_1 \cdot  (\frac{\sigma^2_\mathrm{c}}{h_1} + \frac{\sigma^2_1}{h} + \frac{\sigma^2_\mathrm{d}}{h_2}) - 2 \cdot \frac{\sigma^2_\mathrm{d}}{h_2}
\end{align}
Let $\frac{\mathrm{d} \sigma^2_2}{\mathrm{d} w_1} = 0$,  we have
\begin{align}
w_1 \cdot  (\frac{\sigma^2_\mathrm{c}}{h_1} + \frac{\sigma^2_1}{h} + \frac{\sigma^2_\mathrm{d}}{h_2}) = \frac{\sigma^2_\mathrm{d}}{h_2}
\end{align}
Thus, $\sigma^2_2$ takes the minimum value when
\begin{align}
w_1 = \frac{\sigma^2_\mathrm{d}/h_2}{\sigma^2_\mathrm{c}/h_1 + \sigma^2_1/h + \sigma^2_\mathrm{d}/h_2}.
\end{align}
\end{proof}

One can see that the optimal value of $w_1$ depends on three estimation variances: $\sigma^2_1$, $\sigma^2_\mathrm{d}$, and $\sigma^2_\mathrm{c}$. Obviously, RS-ESTIMATOR cannot directly compute these {\em population variances} because of the lack of knowledge of real $|D_1|$ and $|D_2|$. Nonetheless, we can estimate their values through {\em sample variances} - i.e., the actual variance of estimations produced by the $h$, $h_2$, and $h_1$ drill downs, respectively - and then correct the estimation through Bessel's correction \cite{HA69}.

\vspace{1mm}
\noindent{\bf Distribution of Query Budget:} We now consider how to optimally distribute the query budget for Round $\mathcal{R}_2$ so as to minimize the estimation variance of $\tilde{v}_2$. According to Theorem~\ref{thm:ovv}, when $w_1$ takes the optimal value, the estimation variance of $\tilde{v}_2$ becomes
\begin{align}
\epsilon^2_2 = \frac{(\sigma^2_\mathrm{c}/h_1 + \sigma^2_1/h) \cdot \sigma^2_\mathrm{d}/h_2}{\sigma^2_\mathrm{c}/h_1 + \sigma^2_1/h + \sigma^2_\mathrm{d}/h_2}. \label{equ:vex}
\end{align}

Let $g_\mathrm{c}$ and $g_\mathrm{d}$ be the average query cost per updated and new drill down, respectively. Given a per-round query budget $G$, the following corollary illustrates how to optimally distribute query budget.

\begin{corollary} \label{thm:oqd}
For a given query budget $G$ such that $g_\mathrm{c} \cdot h_1 + g_\mathrm{d} \cdot h_2 = G$, the value of $h_1$ which minimizes $\epsilon^2_2$ is
\begin{small}
\begin{align}
h_1 = \max\left(0, \min\left(\frac{G}{g_\mathrm{c}}, h, \frac{h \cdot (\sqrt{g_\mathrm{d} \cdot \sigma^2_\mathrm{d} \cdot \sigma^2_\mathrm{c} / g_\mathrm{c}} - \sigma^2_\mathrm{c})}{\sigma^2_1}\right)\right). \label{equ:dqb}
\end{align}
\end{small}
\end{corollary}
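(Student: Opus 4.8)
The plan is to treat this as a one-dimensional constrained optimization over $h_1$. First I would use the budget constraint $g_\mathrm{c} h_1 + g_\mathrm{d} h_2 = G$ to eliminate $h_2 = (G - g_\mathrm{c} h_1)/g_\mathrm{d}$, turning $\epsilon^2_2$ from (\ref{equ:vex}) into a function of $h_1$ alone. The observation that makes this tractable is that $\epsilon^2_2$ has the ``parallel resistance'' form $ab/(a+b)$ with $a = \sigma^2_\mathrm{c}/h_1 + \sigma^2_1/h$ and $b = \sigma^2_\mathrm{d}/h_2$; hence $1/\epsilon^2_2 = 1/a + 1/b$, and minimizing $\epsilon^2_2$ is equivalent to maximizing
\begin{align}
f(h_1) = \frac{1}{\sigma^2_\mathrm{c}/h_1 + \sigma^2_1/h} + \frac{G - g_\mathrm{c} h_1}{g_\mathrm{d}\sigma^2_\mathrm{d}}. \nonumber
\end{align}

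Next I would locate the interior stationary point. Differentiating $f$ and setting $f'(h_1)=0$ yields, after clearing the $h_1^2$ factor from the first term, the condition $\sigma^2_\mathrm{c}/(\sigma^2_\mathrm{c} + \sigma^2_1 h_1/h)^2 = g_\mathrm{c}/(g_\mathrm{d}\sigma^2_\mathrm{d})$. Solving for $h_1$ (taking the positive root, since $\sigma^2_\mathrm{c} + \sigma^2_1 h_1/h > 0$) gives exactly the unconstrained optimum $h_1^* = h(\sqrt{g_\mathrm{d}\sigma^2_\mathrm{d}\sigma^2_\mathrm{c}/g_\mathrm{c}} - \sigma^2_\mathrm{c})/\sigma^2_1$, which is the third argument of the nested $\min$ in (\ref{equ:dqb}).

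To justify that clamping $h_1^*$ to the feasible interval yields the true optimum, I would show $f$ is concave. Its second summand is linear in $h_1$, and its first summand, rewritten as $h h_1/(h\sigma^2_\mathrm{c} + \sigma^2_1 h_1)$, has derivative $h^2\sigma^2_\mathrm{c}/(h\sigma^2_\mathrm{c}+\sigma^2_1 h_1)^2$, which is positive and strictly decreasing in $h_1$; hence that summand is concave and so is $f$. Concavity makes $f$ unimodal, so its maximizer over a closed interval is $h_1^*$ when $h_1^*$ lies inside, and the nearer endpoint otherwise. I would then identify the feasible interval from the problem constraints: $h_1 \ge 0$ (a nonnegative count of reissued drill downs), $h_1 \le h$ (only $h$ previous drill downs exist to reissue), and $h_1 \le G/g_\mathrm{c}$ (budget with $h_2 \ge 0$). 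Projecting $h_1^*$ onto $[0,\min(G/g_\mathrm{c}, h)]$ gives $\max(0, \min(G/g_\mathrm{c}, h, h_1^*))$, which is precisely (\ref{equ:dqb}).

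The main obstacle I anticipate is the boundary argument rather than the calculus: the stationarity computation is routine, but one must establish concavity (equivalently unimodality) of $f$ over the entire feasible region so that projecting the interior optimum onto $[0,\min(G/g_\mathrm{c}, h)]$ is guaranteed to recover the global constrained optimum, and one must verify that the three upper bounds combine correctly into the single nested $\min$ that appears in the statement.
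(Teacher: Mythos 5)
Your proposal is correct. Note that the paper itself states Corollary~\ref{thm:oqd} \emph{without proof} --- it is presented as a direct consequence of Theorem~\ref{thm:ovv} and (\ref{equ:vex}) --- so there is no written argument to compare against; your derivation supplies exactly the reasoning the authors leave implicit. The ``parallel resistance'' rewriting $1/\epsilon^2_2 = 1/(\sigma^2_\mathrm{c}/h_1 + \sigma^2_1/h) + h_2/\sigma^2_\mathrm{d}$, the elimination of $h_2$ via the budget constraint, the stationary-point computation giving $h_1^* = h\bigl(\sqrt{g_\mathrm{d}\sigma^2_\mathrm{d}\sigma^2_\mathrm{c}/g_\mathrm{c}} - \sigma^2_\mathrm{c}\bigr)/\sigma^2_1$, and the concavity-plus-clamping argument over the feasible interval $[0, \min(G/g_\mathrm{c}, h)]$ are all sound, and together they yield precisely (\ref{equ:dqb}); the concavity step you flag as the crux is indeed what licenses replacing the constrained optimum by the projection of the interior optimum. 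Your formula is also consistent with the paper's own sanity checks in the paragraph following the corollary: $\sigma^2_\mathrm{c}=0$ forces $h_1=0$, and $\sigma^2_\mathrm{c}\approx\sigma^2_\mathrm{d}\approx\sigma^2_1$ reduces the interior term to $h\bigl(\sqrt{g_\mathrm{d}/g_\mathrm{c}}-1\bigr)$, matching the text.
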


Once again, the optimal distribution depends on $\sigma^2_1$, $\sigma^2_\mathrm{d}$ and $\sigma^2_\mathrm{c}$, which, as described above, can be approximated using sample variances.  Note that the optimal distribution here also depends on $g_\mathrm{c}$ and $g_\mathrm{d}$, which we also estimate according to the bootstrapping drill downs conducted in the beginning, as shown in the detailed algorithm described in the next subsection.

\vspace{1mm}
\noindent {\bf Comparison with REISSUE-ESTIMATOR:} The comparison between REISSUE- and RS-ESTIMATOR can be directly observed from Corollary~\ref{thm:oqd}, which indicates that (1) when the database undergoes little change, RS-ESTIMATOR mostly conducts new drill downs - leading to a lower estimation error than REISSUE, and (2) when the database changes drastically, RS-ESTIMATOR will be automatically reduced to REISSUE-ESTIMATOR.

To understand why, consider the following examples. When the database is not changed, i.e., $\sigma^2_\mathrm{c} = 0$, Corollary~\ref{thm:oqd} indicates that $h_1 = 0$ - i.e., all query budget will be devoted to initializing new drill downs. On the other hand, when the database undergoes fairly substantial changes -  e.g., when $\sigma^2_\mathrm{c} \approx \sigma^2_\mathrm{d} \approx \sigma^2_1$, Corollary~\ref{thm:oqd} indicates $h_1 = \max(0, \min(G/g_\mathrm{c}, h, h \cdot (\sqrt{g_\mathrm{d}/g_\mathrm{c}} - 1)))$. Recall from \S\ref{sec:rie} that $g_\mathrm{d} > g_\mathrm{c}$ in most practical scenarios. Thus, we have $h_1 = \min(G/g_\mathrm{c}, h)$ - i.e., the query budget is devoted to updating Round-$\mathcal{R}_1$ drill downs as much as possible, exactly like what REISSUE-ESTIMATOR would do in this circumstance.

\subsection{Algorithm RS-ESTIMATOR}

\noindent {\bf Extension to General Aggregates and Multi-Rounds:} We discussed in the last subsection an example of applying RS-ESTIMATOR to estimate COUNT(*) in the second round $\mathcal{R}_2$. We now consider the extension to generic cases, specifically at the following two fronts: (1) in an arbitrary round $\mathcal{R}_i$, by leveraging multiple historic rounds of estimations, and (2) for estimating generic aggregates defined in \S\ref{sec:oae}.

To understand the challenge of these extensions, consider the following scenario: At Round $\mathcal{R}_3$, we detect more changes to the database (i.e., a larger $\sigma^2_\mathrm{c}$), and therefore need a larger number of updated drill downs (i.e., $h_1$) than the second round. One can see what happens here is that there are not enough second-round drill downs for us to update. Thus, we may have to update some of the drill downs from the first round, leading to two problems:
\begin{itemize}
\item Since the difference between first- and third-round results may be larger, we have to adjust the distribution of query budget accordingly - e.g., by giving a higher budget to new drill downs (i.e., larger $h_2$).
\item These cross-round updated drill downs will not be useful for estimating queries such as $|D_3| - |D_2|$, because we do not know their results for Round $\mathcal{R}_2$.
\end{itemize}

Before presenting results that address these two challenges, we first introduce some notations. Specifically, for any aggregate query $Q$ (to be estimated) and any drill down (with signature) $r_i$ which we last updated in Round $\mathcal{R}_x$, we denote the estimation produced by updating $r_i$ at Round $\mathcal{R}_j$ by a function\footnote{\small{Note that we are reusing the symbol $f(\cdot)$ which we used to denote the aggregate estimation function in \S\ref{sec:ffr}. The meaning of $f(\cdot)$ remains the same, yet the input here is not just the current-round result but results from both $\mathcal{R}_x$ and $\mathcal{R}_j$. In addition, the function now depends on the aggregate function $Q$ to be estimated.}} $f_Q(x, q_j(r_i))$, where $q_j(r_i)$ is the top non-overflowing query in the path with signature $r_i$ at Round $\mathcal{R}_j$. Note that $f_Q(\cdot)$ differs for different aggregate query $Q$. For example,
\begin{itemize}
\item If $Q$ is SELECT SUM($A_1$) FROM D, we have
\begin{align}
f_Q(x, q_j(r_i)) = \tilde{Q}_x + \sum_{t \in q_j(r_i)} \frac{t[A_1]}{p(q_j(r_i))} \hspace{1mm} - \sum_{t \in q_x(r_i)} \frac{t[A_1]}{p(q_x(r_i))}, \nonumber
\end{align}
where $\tilde{Q}_x$ is the estimation we produced for $Q$ in Round $\mathcal{R}_x$. One can see that what we discussed (for $Q = |D|$) in \S\ref{sec:kir} is a special case of this result (when $x = 1$, $j = 2$, $t[A_i] \equiv 1$ and $\tilde{Q}_1 = |D_1|$.)
\item If $Q$ is $|D_j| - |D_{j-1}|$ and $x = j-1$, we have $f_Q(x, q_j(r_i)) = |q_j(r_i)| / p(q_j(r_i)) - |q_x(r_i)|/p(q_x(r_i))$.
\item If $Q$ is $|D_j| - |D_{j-1}|$ and $x < j-1$, we have $f_Q(x, q_j(r_i)) = |q_j(r_i)| / p(q_j(r_i)) - |\tilde{D}_{j-1}|$ where $|\tilde{D}_{j-1}|$ is our estimation of $|D_{j-1}|$ from the previous round. In this case, no result from Round $\mathcal{R}_x$ (e.g., $q_x(r_i)$) is used in the computation of $f_Q$ - i.e., updating old and initiating new drill downs become equivalent.
\end{itemize}

Recall from the discussion in \S\ref{sec:are} and Theorem~\ref{thm:uep} that, for any COUNT and SUM aggregate query $Q$ defined in \S\ref{sec:oae}, we can find $f_Q(\cdot)$ that produces an unbiased estimation of $Q$. Given $f_Q(\cdot)$, we are now ready to theoretically analyze the optimal distribution of query budget for estimating generic aggregates in multi-round scenarios. First, we have a corollary to Theorem~\ref{thm:ovv}.

\begin{corollary} \label{cor:ge1}
At Round $\mathcal{R}_j$, if RS-ESTIMATOR updates $c_1, \ldots, c_{j-1}$ drill downs that were last updated in Rounds $\mathcal{R}_1$, $\ldots$, $\mathcal{R}_{j-1}$, respectively, and initiates $c_j$ new drill downs, the optimal estimation for an aggregate $Q$ is
\begin{align}
\sum^j_{x=1} \left(\frac{1}{c_x \cdot v^2_x(c_x) \cdot \sum^j_{y=1} \frac{1}{v^2_y(c_y)} } \cdot \sum_{i \in [1, c_x]} f_Q(i, q_j(r^x_i))\right)
\end{align}
where $r^x_i$ is the $i$-th drill down among the $c_x$ ones that were last updated in Round $\mathcal{R}_x$, $f_Q(j, q_j(r^x_j))$ is the estimation from a new drill down $r^x_j$ in Round $\mathcal{R}_j$, and $v^2_x(c_x)$ is the variance of
\begin{align}
\frac{1}{c_x} \sum^{c_x}_{i=1} f_Q(i, q_j(r^x_i))
\end{align}
taken over the randomness of $r^x_i$. The estimation variance is
\begin{align}
\epsilon^2_j(Q) = \frac{1}{\sum^j_{x=1} (1/v^2_x(c_x))}
\end{align}
\end{corollary}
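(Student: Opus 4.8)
The plan is to view the final estimator as an inverse-variance weighted combination of $j$ independent unbiased estimators, one per group of drill downs, and to recover both the stated estimator and its variance by solving the same constrained optimization that appears in Theorem~\ref{thm:ovv}, now generalized from two groups to $j$. For each $x \in [1,j]$, write the group estimator
\begin{align}
\bar{f}_x = \frac{1}{c_x}\sum_{i \in [1,c_x]} f_Q(i, q_j(r^x_i)),
\end{align}
so that $v^2_x(c_x) = \mathrm{Var}(\bar{f}_x)$ by definition, and observe that the claimed estimator is exactly $\sum_{x=1}^j w_x \bar{f}_x$ with $w_x = (1/v^2_x(c_x)) / \sum_{y=1}^j (1/v^2_y(c_y))$.

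First I would establish two properties of the $\bar{f}_x$. \emph{Unbiasedness:} each $\bar{f}_x$ is an unbiased estimator of $Q$, which follows from the extension of Theorem~\ref{thm:uep} to the function $f_Q(\cdot)$; in particular, for the cross-round case $x < j-1$ the construction of $f_Q$ (third bullet in \S\ref{sec:sam}) reduces an updated drill down to a fresh estimate combined with a previously established unbiased estimate, so unbiasedness is inherited. \emph{Independence:} the $j$ group estimators are mutually independent because they are computed from disjoint sets of drill downs whose signatures are drawn independently and uniformly; conditioning on the fixed database states $D_1, \ldots, D_j$ (round-update model), the only remaining randomness is this signature selection, and distinct signatures yield independent estimators.

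Given these two properties, the second step is routine. Any convex combination $\sum_x w_x \bar{f}_x$ with $\sum_x w_x = 1$ remains unbiased, and by independence its variance is $\sum_x w_x^2 v^2_x(c_x)$. Minimizing this subject to $\sum_x w_x = 1$ via a Lagrange multiplier (exactly as in the $j=2$ case of Theorem~\ref{thm:ovv}) gives $w_x \propto 1/v^2_x(c_x)$, hence $w_x = (1/v^2_x(c_x))/\sum_y (1/v^2_y(c_y))$. Substituting back $\bar{f}_x = (1/c_x)\sum_i f_Q(i, q_j(r^x_i))$ produces the stated estimator, and plugging the optimal weights into $\sum_x w_x^2 v^2_x(c_x)$ collapses the sum to $\epsilon^2_j(Q) = 1/\sum_{x=1}^j (1/v^2_x(c_x))$.

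The main obstacle I expect is the independence claim rather than the optimization, which is mechanical. One must argue carefully that, \emph{despite} reissuing reusing the same signatures across rounds, estimators belonging to different groups really are independent: the groups are indexed by distinct drill downs (distinct signatures), and because the $D_i$ are treated as fixed, each estimator is a deterministic function of its own signature together with the fixed data, so independence of the signatures transfers to independence of the group estimators. Care is also needed to confirm that each $v^2_x(c_x)$ is finite and positive so that the inverse-variance weights are well defined; this holds whenever every group contains at least one drill down with nonzero estimation variance.
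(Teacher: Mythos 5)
Your proposal is correct and takes essentially the same route as the paper: the paper states this corollary without a separate proof, presenting it as the direct generalization of Theorem~\ref{thm:ovv} from two groups to $j$ groups, and your inverse-variance weighting argument (variance of a convex combination of independent unbiased group estimators, minimized subject to $\sum_x w_x = 1$) is exactly that generalization, with a Lagrange multiplier replacing the paper's single-variable differentiation in the $j=2$ case. Your explicit attention to unbiasedness and cross-group independence makes rigorous what the paper leaves implicit, so there is nothing to correct.
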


One can see that Theorem~\ref{thm:ovv} and (\ref{equ:vex}) indeed represent a special case of this corollary when $j = 2$, $c_1 = h_1$, $c_2 = h_2$, $v^2_1(c_1) = \sigma^2_\mathrm{c}/h_1 + \sigma^2_1/h$, and $v^2_2(c_2) = \sigma^2_\mathrm{d}/h_2$. According to Corollary~\ref{cor:ge1}, we can derive the optimal query budget distribution similar to Corollary~\ref{thm:oqd}. This derivation uses the following two key properties of $f_Q(x, q_j(r^x_j))$:
\begin{align}
\beta_x &= \lim_{c_x \to \infty} \mathrm{var}(f_Q(x, q_j(r^x_j))) = \lim_{c_x \to \infty} v^2_x(c_x)\\
\alpha_x &= \lim_{c_x \to \infty} c_x \cdot (\mathrm{var}(f_Q(x, q_j(r^x_j))) - \beta_x)\\
&= \lim_{c_x \to \infty} c_x \cdot (v^2_x(c_x) - \beta_x)
\end{align}
where $\mathrm{var}(\cdot)$ represents the estimation variance.
\begin{corollary} \label{cor:ge2}
For a given query budget $G$ such that $\sum^j_{x=1} g_x \cdot c_x = G$, the value of $c_x$ which minimizes $\epsilon^2_j(Q)$ is
\begin{align}
c_x = \frac{G \cdot \sqrt{g_x/\alpha_x}}{\beta_x \cdot \sum^j_{i = 1}((\sqrt{g_i/\alpha_i} - \alpha_i) \cdot (g_i/\beta_i))} - \frac{\alpha_x}{\beta_x}.
\end{align}
when $\beta_x > 0$ for all $x$. When $\beta_x = 0$ for all $x$, then $c_x = G/g_x$ if $x$ minimizes $\alpha_x \cdot g_x$ and $0$ otherwise (break tie arbitrarily). When $\beta_x > 0$ for some $x$, consider $y$ such that $\beta_y = 0$ and
\begin{align}
\alpha_y \cdot g_y = \min_{i \in [1, j], \beta_i = 0} \alpha_i \cdot g_i.
\end{align}
The optimal distribution is to have
\begin{align}
c_x = \max\left(0, \min\left(\frac{G}{g_x}, \frac{\sqrt{\alpha_x \cdot \alpha_y \cdot g_y / g_x} - \alpha_x}{\beta_x}\right)\right). \label{equ:fod}
\end{align}
\end{corollary}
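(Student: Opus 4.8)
The plan is to recognize Corollary~\ref{cor:ge2} as the solution of a separable convex program and solve it by a Lagrangian/KKT analysis. First I would substitute the asymptotic form of the group variances implied by the definitions of $\alpha_x$ and $\beta_x$, namely $v_x^2(c_x) = \beta_x + \alpha_x/c_x$ (valid in the large-$c_x$ regime those limits describe). Since the estimation variance from Corollary~\ref{cor:ge1} is $\epsilon_j^2(Q) = 1/\sum_x (1/v_x^2(c_x))$, minimizing it is equivalent to maximizing
\[
F(c_1,\ldots,c_j) = \sum_{x=1}^j \frac{1}{v_x^2(c_x)} = \sum_{x=1}^j \frac{c_x}{\beta_x c_x + \alpha_x}
\]
over the feasible region $\{c_x \ge 0 : \sum_x g_x c_x = G\}$.

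Next I would establish that this is a genuine convex program: each summand is concave in $c_x$ (its second derivative is $-2\alpha_x\beta_x/(\beta_x c_x + \alpha_x)^3 \le 0$, and it degenerates to the \emph{linear} function $c_x/\alpha_x$ when $\beta_x = 0$), while the constraints are linear, so the KKT conditions are necessary and sufficient. Writing a multiplier $\lambda$ for the budget constraint and computing $\partial F/\partial c_x = \alpha_x/(\beta_x c_x + \alpha_x)^2$, the stationarity condition $\partial F/\partial c_x = \lambda g_x$ for an active, non-degenerate coordinate solves to
\[
c_x = \frac{\sqrt{\alpha_x/(\lambda g_x)} - \alpha_x}{\beta_x}.
\]
The three cases of the corollary then correspond to how $\lambda$ is pinned down. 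When $\beta_x > 0$ for all $x$, I would substitute these stationary values into $\sum_x g_x c_x = G$ and solve the resulting equation (linear in $1/\sqrt{\lambda}$) for $\lambda$, producing the first closed form.

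The delicate part is the degenerate groups with $\beta_x = 0$ --- exactly the drill-down groups (e.g.\ brand-new drill downs) whose per-drill-down estimate carries no shared historic term, so their contribution to $F$ is linear with constant marginal return $1/(\alpha_x g_x)$ per unit budget. Because this return does not decay with $c_x$, it cannot be balanced by an interior first-order condition; instead I would argue by a water-filling / complementary-slackness argument that any leftover budget concentrates on the single most efficient linear group, i.e.\ the $y$ minimizing $\alpha_x g_x$, which fixes the shadow price at $\lambda = 1/(\alpha_y g_y)$. Substituting this $\lambda$ into the stationary formula yields precisely the term $(\sqrt{\alpha_x \alpha_y g_y / g_x} - \alpha_x)/\beta_x$ in (\ref{equ:fod}), and the $\max(0,\min(G/g_x,\cdot))$ wrapping follows from the non-negativity constraint ($c_x = 0$ when the stationary value is negative, equivalently when group $x$'s marginal return at zero, $1/(\alpha_x g_x)$, already falls below $\lambda$) together with the single-group feasibility bound $c_x \le G/g_x$. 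The all-$\beta_x = 0$ case is the pure-linear limit in which the entire budget goes to the one group minimizing $\alpha_x g_x$.

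I expect the main obstacle to be handling these degenerate linear groups rigorously: justifying that the budget is absorbed by a single best linear group and that this correctly determines $\lambda$, rather than relying on interior stationarity. A secondary point needing care is verifying that the per-coordinate clipping is consistent with the coupling budget constraint; this is fine here because, once $\lambda$ is fixed by the degenerate group, the problem decouples and the clipped stationary values are exactly the KKT-optimal allocations with group $y$ soaking up the remainder. One should, however, note that the result is \emph{asymptotically} optimal, inheriting the large-$c_x$ approximation $v_x^2(c_x) \approx \beta_x + \alpha_x/c_x$.
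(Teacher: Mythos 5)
Your Lagrangian/KKT formalization is sound, and it is worth noting at the outset that the paper itself offers \emph{no} proof of this corollary --- it is asserted to follow ``similar to Corollary~\ref{thm:oqd}'' (itself also unproved); the only worked optimization in the paper is the elementary derivative computation in Theorem~\ref{thm:ovv}. Your reading of the definitions of $\alpha_x,\beta_x$ as the asymptotic model $v_x^2(c_x)=\beta_x+\alpha_x/c_x$ is exactly right (it is exact in the paper's own special case $v_1^2=\sigma^2_{\mathrm{c}}/h_1+\sigma^2_1/h$, $v_2^2=\sigma^2_{\mathrm{d}}/h_2$, where the historic estimate is a shared term and the drill downs are i.i.d.), the concavity computation is correct, the stationary solution $c_x=\bigl(\sqrt{\alpha_x/(\lambda g_x)}-\alpha_x\bigr)/\beta_x$ is correct, and your water-filling treatment of the degenerate $\beta=0$ groups --- pinning $\lambda=1/(\alpha_y g_y)$ at the cheapest linear group --- correctly reproduces both the all-$\beta_x=0$ case and (\ref{equ:fod}), which in turn specializes to Corollary~\ref{thm:oqd} under $\alpha_1=\sigma^2_{\mathrm{c}}$, $\alpha_2=\sigma^2_{\mathrm{d}}$, $\beta_1=\sigma^2_1/h$, $\beta_2=0$.

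There is, however, one concrete problem: carrying out your first case does \emph{not} ``produce the first closed form'' as printed. Substituting the stationary values into $\sum_x g_x c_x=G$ gives $1/\sqrt{\lambda}=\bigl(G+\sum_i \alpha_i g_i/\beta_i\bigr)\big/\bigl(\sum_i \sqrt{\alpha_i g_i}/\beta_i\bigr)$ and hence
\begin{align}
c_x = \frac{1}{\beta_x}\left(\sqrt{\frac{\alpha_x}{g_x}}\cdot\frac{G+\sum^j_{i=1}\alpha_i g_i/\beta_i}{\sum^j_{i=1}\sqrt{\alpha_i g_i}/\beta_i}-\alpha_x\right),
\end{align}
whereas the corollary's printed expression has the inverted radical $\sqrt{g_x/\alpha_x}$ and a denominator $\sum_i(\sqrt{g_i/\alpha_i}-\alpha_i)(g_i/\beta_i)$ that subtracts a variance from a quantity with different units, does not satisfy the budget constraint, and is even undefined in the symmetric instance $\alpha_i=\beta_i=g_i=1$, $G=2$ (where the true optimum is plainly $c_1=c_2=1$). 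So the printed case-1 formula is a misprint; your method is the correct one, but you should state explicitly that the derivation yields the formula above rather than the one in the statement. Two smaller provisos: the interior stationary values must still be clipped at $c_x\geq 0$ (your marginal-return-at-zero criterion handles this), and your claim that the clipping in (\ref{equ:fod}) is exactly KKT-optimal holds only when the clipped concave allocations leave a nonnegative remainder of budget for group $y$; if they exhaust $G$, then $\lambda>1/(\alpha_y g_y)$ and the per-coordinate formula is no longer exact --- a boundary case the corollary itself glosses over, so it is a defect of the statement rather than of your argument.
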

One can see that Corollary~\ref{thm:oqd} is indeed a special case of this corollary when the last scenario occurs - i.e., one $\beta_2$ (for the $h_2$ new drill downs) is 0, while the other $\beta_1$ (for the $h_1$ updated ones) is not ($\beta = \sigma^2_1/h$). Note that in this special case, we have $\alpha_1 = \sigma^2_\mathrm{c}$ and $\alpha_2 = \sigma^2_\mathrm{d}$ - making (\ref{equ:fod}) equivalent with (\ref{equ:dqb}).

\vspace{1mm}
\noindent {\bf Algorithm Description:} We are ready to describe the generic RS-ESTIMATOR algorithm. At each round $\mathcal{R}_j$, our algorithm starts by conducting a number of {\em bootstrapping} drill downs (line 4) to estimate $v^2_x(c_x)$ and $g_x$ - i.e., the variance of estimation and the query cost for updating a drill down that was last updated at Round $\mathcal{R}_x$ (recall that if $x = j$, such an ``update'' is indeed initializing a new drill down). Specifically, for each $x \in [1, j]$, we conduct $\varpi$ bootstrapping drill downs, where $\varpi$ is a user-specified parameter.

After conducting the $r \cdot j$ drill downs, we obtain the estimations for $\alpha_x$, $\beta_x$ and $g_x$, and compute according to Corollary~\ref{cor:ge2} the optimal distribution of query cost - i.e., $c_x$ drill downs for each $x \in [1, j]$ (line 5). Suppose that there are $h_x$ drill downs that were last updated at Round $\mathcal{R}_x$. For each $x \in [1, j]$, we choose $\min(c_x, h_x)$ drill downs uniformly at random from the $h_x$ ones\footnote{\small{Note that when $x = j$, we simply choose $c_x$ drill downs uniformly at random.}}, and place all chosen drill downs into a pool $\Omega$ (line 6). Then, we conduct drill downs in $\Omega$ according to a random order (line 8) until exhausting all query budget. Note that the reason for doing so is to accommodate the error of estimating $g_x$ - i.e., we might not have enough query budget to completely conduct all $c_1 + \cdots + c_j$ drill downs. Finally, we produce our final estimation of $Q$ according to Corollary~\ref{cor:ge1} (line 9).

Algorithm~\ref{alg:rsEst} provides a pseudocode for RS-ESTIMATOR.

\begin{algorithm}[!htb]
\caption{{\bf RS-ESTIMATOR}}
\begin{algorithmic}[1]
\label{alg:rsEst}
\STATE {\bf Input:} Aggregate query $Q$, round index $j$, bootstrapping query limit per round $\varpi$
\STATE Set drill down pool $\Omega$ = \{\}
\FOR {x=1 \TO j}
\STATE Execute $\varpi$ pilot drill-downs corresponding to $\mathcal{R}_x$
\STATE Estimate update budget $c_x$ using Corollary~\ref{cor:ge2}
\STATE $\Omega = \Omega$ $ \cup $ randomly picked $c_x$ queries from round $\mathcal{R}_x$
\ENDFOR
\STATE Issue randomly chosen drill-downs from $\Omega$ till exhausting query budget
\STATE Produce aggregate estimation according to Corollary~\ref{cor:ge1}.
\end{algorithmic}
\vspace{-1mm}
\end{algorithm}


%
%
%
%

\section{System Design} \label{sec:sim}

While there exists a vast diversity in the properties of the dynamic hidden web databases in the real world,
their differences can be abstracted into two dimensions. The first is the {\em query model} that describes the various ways in which the query (or queries) for aggregate estimation are specified. The second corresponds to the {\em update model} that describes how the database is updated over time. In \S\ref{sec:rie} and \ref{sec:sam}, we assumed a simple objective of accurately estimating one {\em pre-defined} aggregate over dynamic data, and a simple update model where all updates to a hidden database occur at the beginning of a {\em round}. In this section, we describe how to adapt our algorithms for other commonly used query and update models.

\subsection{Dimension 1: Query Model}
The query model over dynamic hidden web databases has some parallels to queries over continuous data streams \cite{BBDMW02}.
In this subsection, we describe the two categories of aggregate estimation queries that could satisfy the need of most users.
\begin{itemize}
\item {\em Stream query model}:
This is also called a {\em pre-defined query model} and
supports aggregate queries that are specified upfront before any query processing over the hidden database starts.
These queries are evaluated continuously as the database evolves and should be constantly updated with the most recent estimation.
\item {\em Ad hoc query model}:
This model supports aggregate queries that could be issued at any point {\em after} our system started tracking the database.
Such queries could be one-time, where the query is evaluated once over, or continuous, where the query is evaluated over
the database instance at the present or some time in the past (albeit after the time our system starts tracking the database).
\end{itemize}

Our algorithms can handle both the query models with few modifications.
Recall that at any point in time, our system maintains a number of drill downs with (past) results from previous rounds
that are refreshed (restarted or reissued) at each round.
For the streaming model, the aggregate estimates constructed from the past drill downs are revised
based on the updated results.
For the ad hoc model, since all tuples retrieved by the previous drill downs can be preserved, one can ``simulate'' the aggregate estimation as if the query was issued prior to the drill downs being done.

Nonetheless, it is important to note that there may still be significant performance differences between aggregate estimation over the two query models.
For example, recall from \S\ref{sec:are} that, with the stream model, if we know all aggregates to process have a particular predicate (e.g., $A_1 = 1$), then we could build a query tree (for drill-down) with all nodes having that predicate (e.g., if the predicate is $A_1 = 1$, then we are essentially taking the subtree of the original query tree under the branch $A_1 = 1$).
This smaller query tree could lead to a more accurate aggregate estimation - a benefit we can no longer enjoy for accommodating the ad hoc query pool. As another example, at each round RS-ESTIMATOR makes decisions on the distribution of query budget according to the aggregate query being processed. Once again, we can no longer reach the optimal distribution when the ad hoc query model is in place.

While the previous sections tracked only a single query, our system could handle multiple queries simultaneously.
However, performing adaptive query processing for multiple queries or
deciding the distribution of drill-downs to update based on the complex correlations between queries is a major challenge.
This is a major challenge even for traditional database query processing and a significant body of prior work exists (see \cite{Deshpande:2007:AQP:1331939.1331940} for a survey).
Adapting these techniques for hidden web databases is non-trivial and is beyond the scope of our paper.

\subsection{Dimension 2: Update Model} \label{sec:dum}
The update model abstracts the frequency of database updates and the query budget.
Depending on whether the database updates happen in batch or in an even fashion, we can categorize dynamic hidden web databases into two types. One is the {\em round-update model} discussed so far in the paper. The other is {\em constant-update model} - i.e., the database may be updated at any time, even in the middle of the execution of our algorithms.


Adapting our (or any) algorithms for constant update model is much trickier.
The notion of rounds cannot be adapted to handle such a case as the database is updated at any point in time,
even in the middle of algorithm execution.
A even major problem is the definition of the ground truth -
how to define the aggregate over such an unpredictably changing database.
Should we try to estimate the aggregate for the database instance that existed when the algorithm started? Or is it the end?
While both definitions sound fine, it is easy to see no algorithm can guarantee unbiased estimation/sampling
no matter how the ground truth is defined.
Nevertheless, our algorithm can be easily adapted to deal with this model
(albeit without the unbiasedness guarantees that may otherwise be provided).
A natural way is to utilize the concept of rounds by ignoring all updates after a specific point in time.
Recall from \S\ref{sec:sam} that RS-ESTIMATOR uses a set of bootstrapping queries to estimate the amount of change.
After this pilot phase is over, we ignore further changes and utilize the information obtained from the pilot phase
for deciding on how to redistribute queries or how to recompute new estimation.
Notice that this estimate could be biased by the tuples that were inserted before the end of pilot phase.

\noindent {\bf Other Issues:} 
While our data model makes a simplistic no-NULL assumption, our algorithms readily handle NULL values as long as the hidden database has a clear policy on how to handle NULL in processing search queries.
For example, if the database allows one to include predicate $A_i$ IS NULL in a search query, then NULL can simply be treated as yet another value in the attribute domain. If, on the other hand, the database features a broad match policy and returns a tuple $t$ with $A_i$ being NULL to queries with any $A_i = v_{ij}$, then we can still precisely compute the probability for $t$ to be returned (by simply multiplying it with the domain size of $A_i$) and use REISSUE- or RS-ESTIMATOR as is.

\vspace{2mm}
\section{Experimental Evaluation}
\vspace{-2mm}
\begin{figure*}[ht]
\begin{minipage}[t]{0.23\linewidth}
\centering
\includegraphics[width = 40mm]{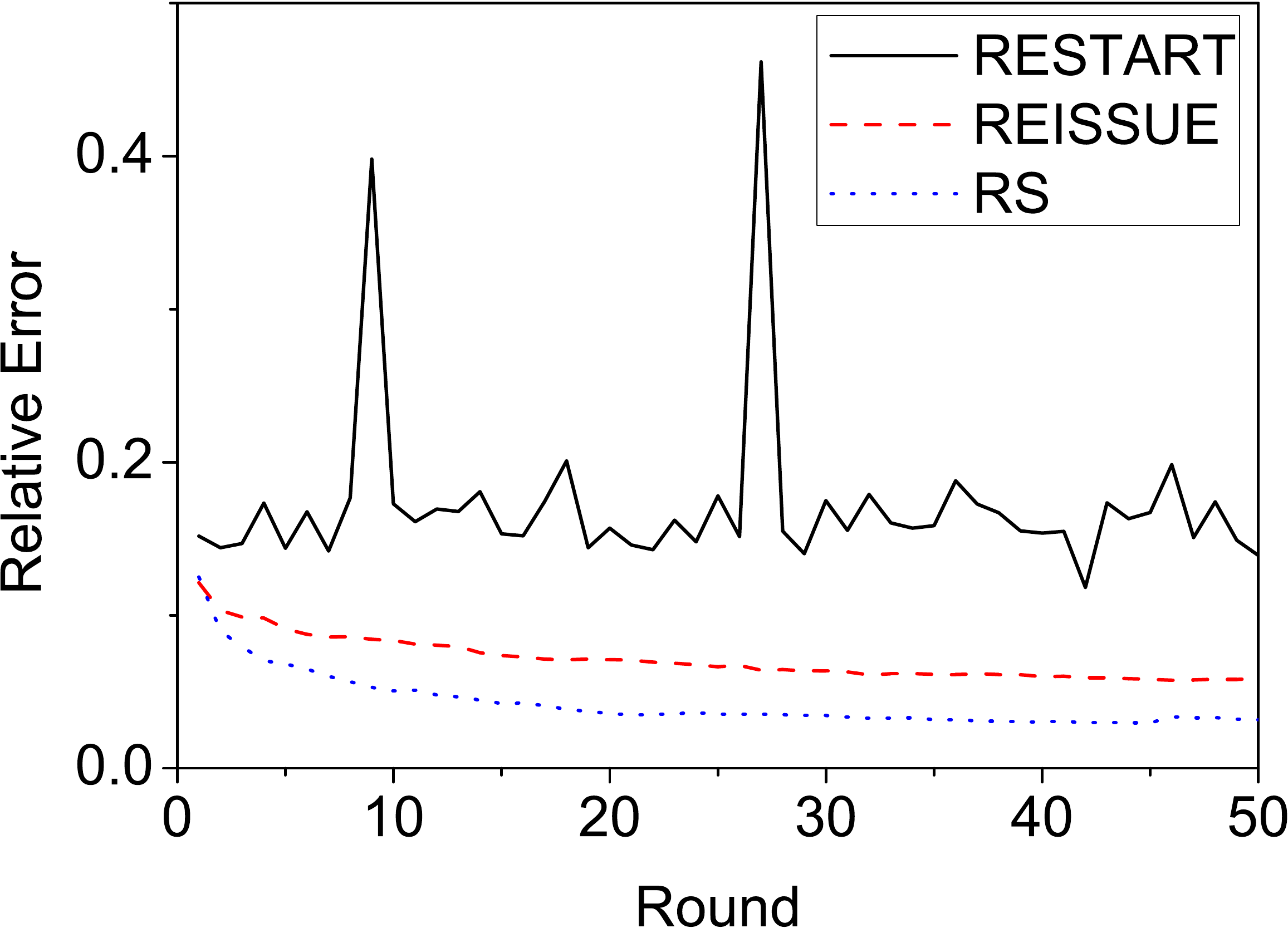}
\vspace{-2mm}\caption{Relative Error}
\label{fig1}
\end{minipage}
\hspace{1mm}
\begin{minipage}[t]{0.23\linewidth}
\centering
\includegraphics[width = 40mm]{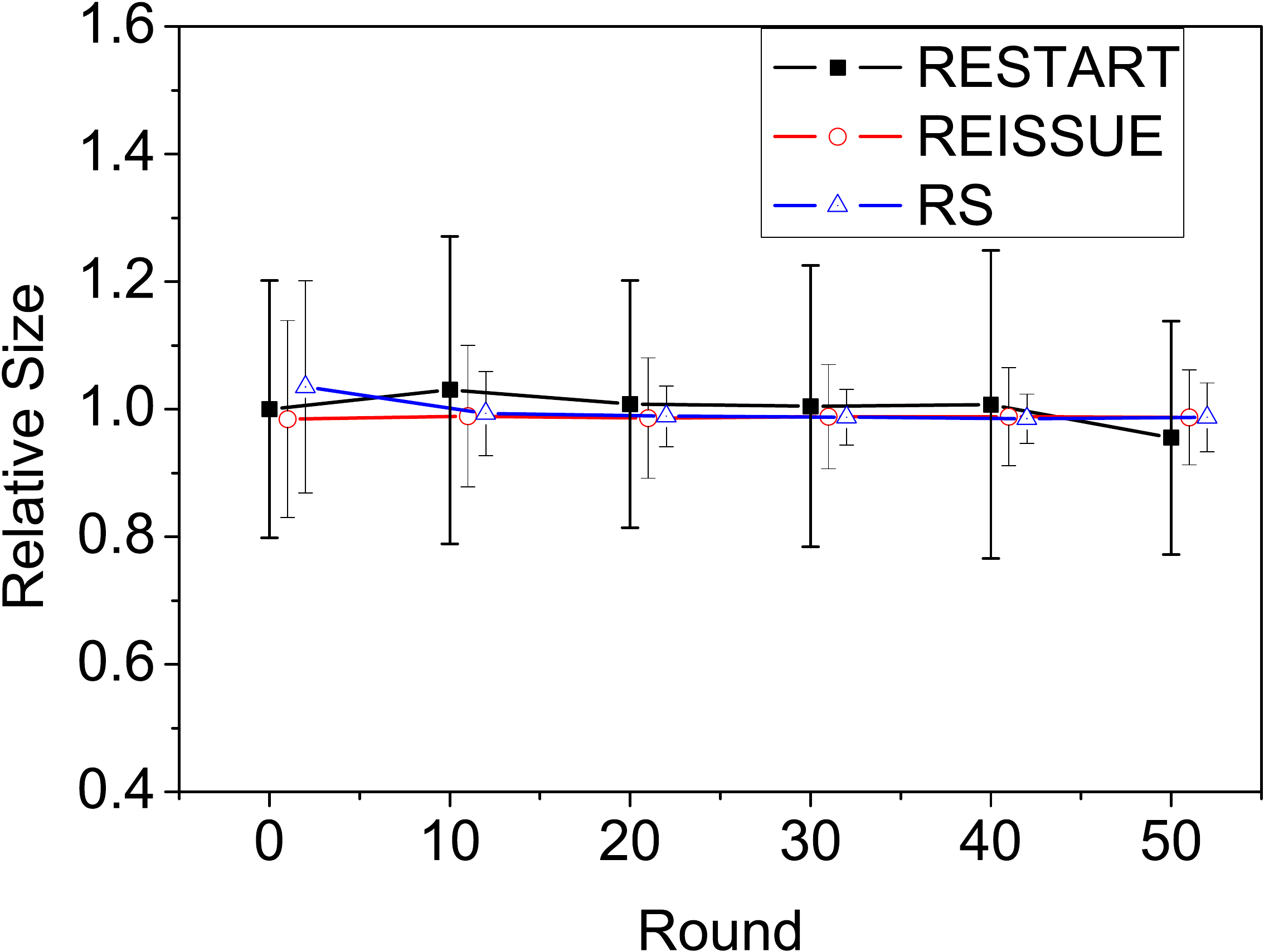}
\vspace{-2mm}\caption{Error Bar}
\label{fig2}
\end{minipage}
\hspace{1mm}
\begin{minipage}[t]{0.23\linewidth}
\centering
\includegraphics[width = 40mm]{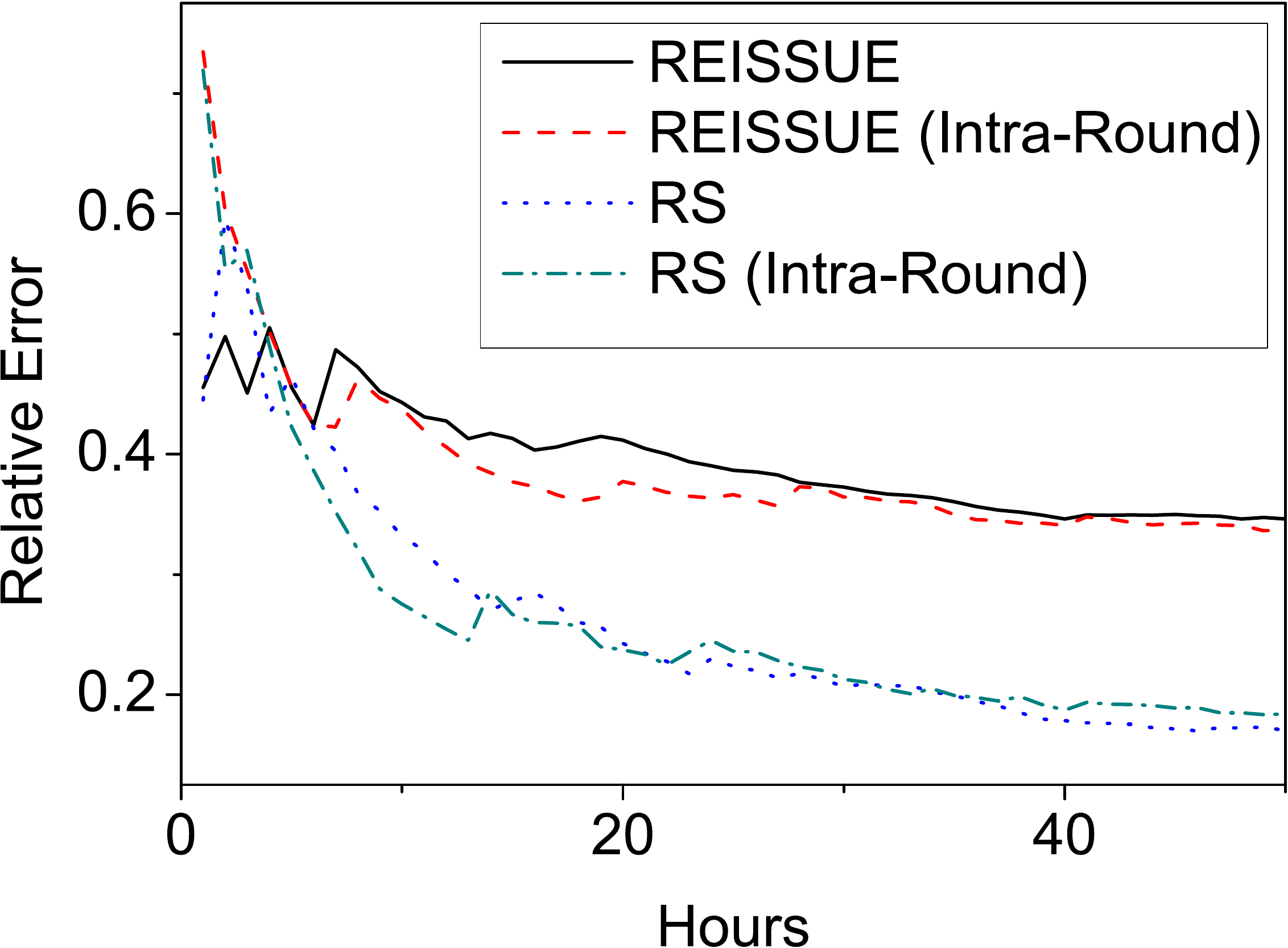}
\vspace{-2mm}\caption{Intra-Round}
\label{fig3}
\end{minipage}
\hspace{3mm}
\begin{minipage}[t]{0.23\linewidth}
\centering
\includegraphics[width = 40mm]{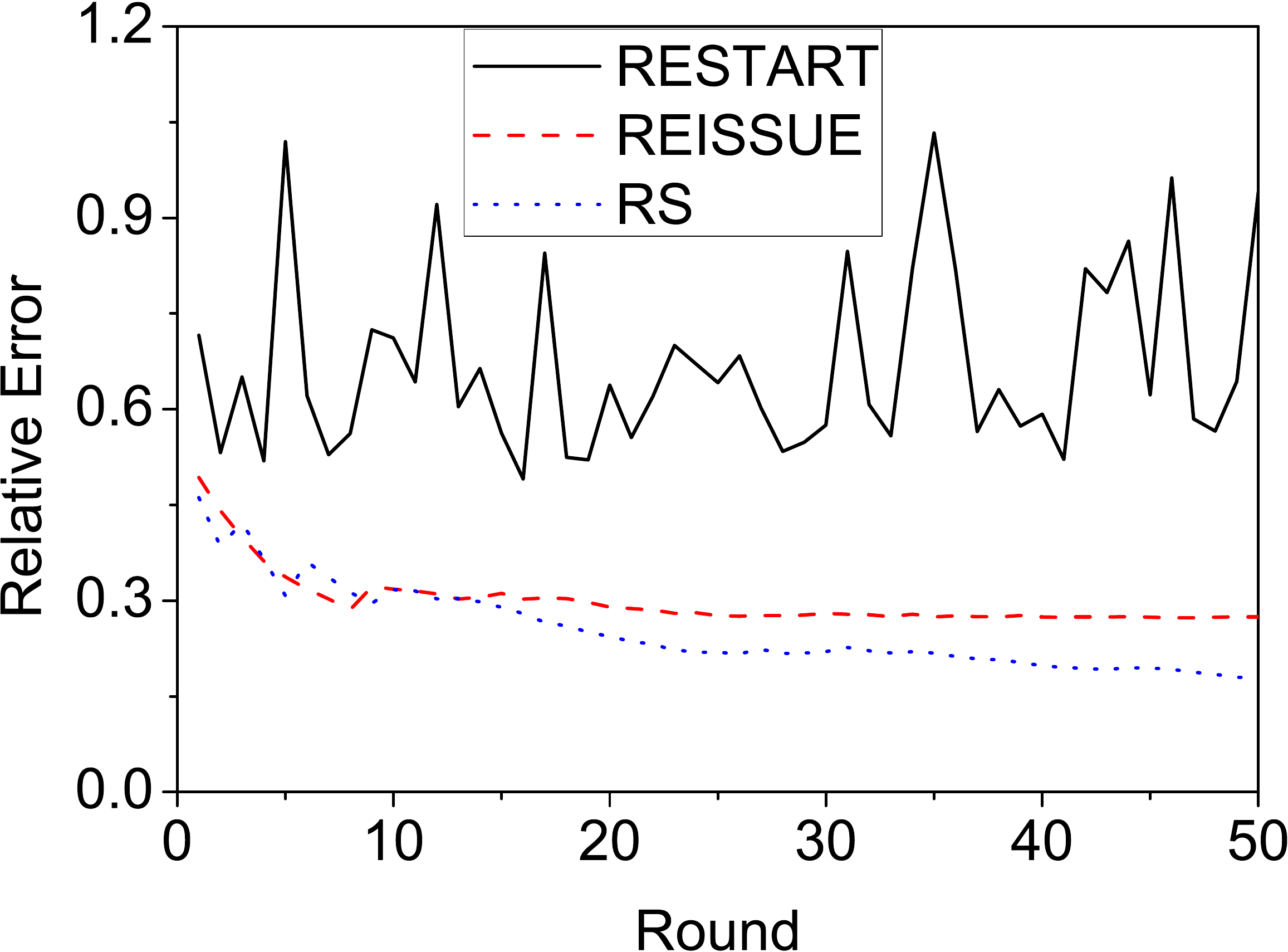}
\vspace{-2mm}\caption{Little Change}
\label{fig4}
\end{minipage}
\hspace{-2mm}
\end{figure*}

\begin{figure*}[t]
\begin{minipage}[t]{0.22\linewidth}
\centering
\includegraphics[width = 40mm]{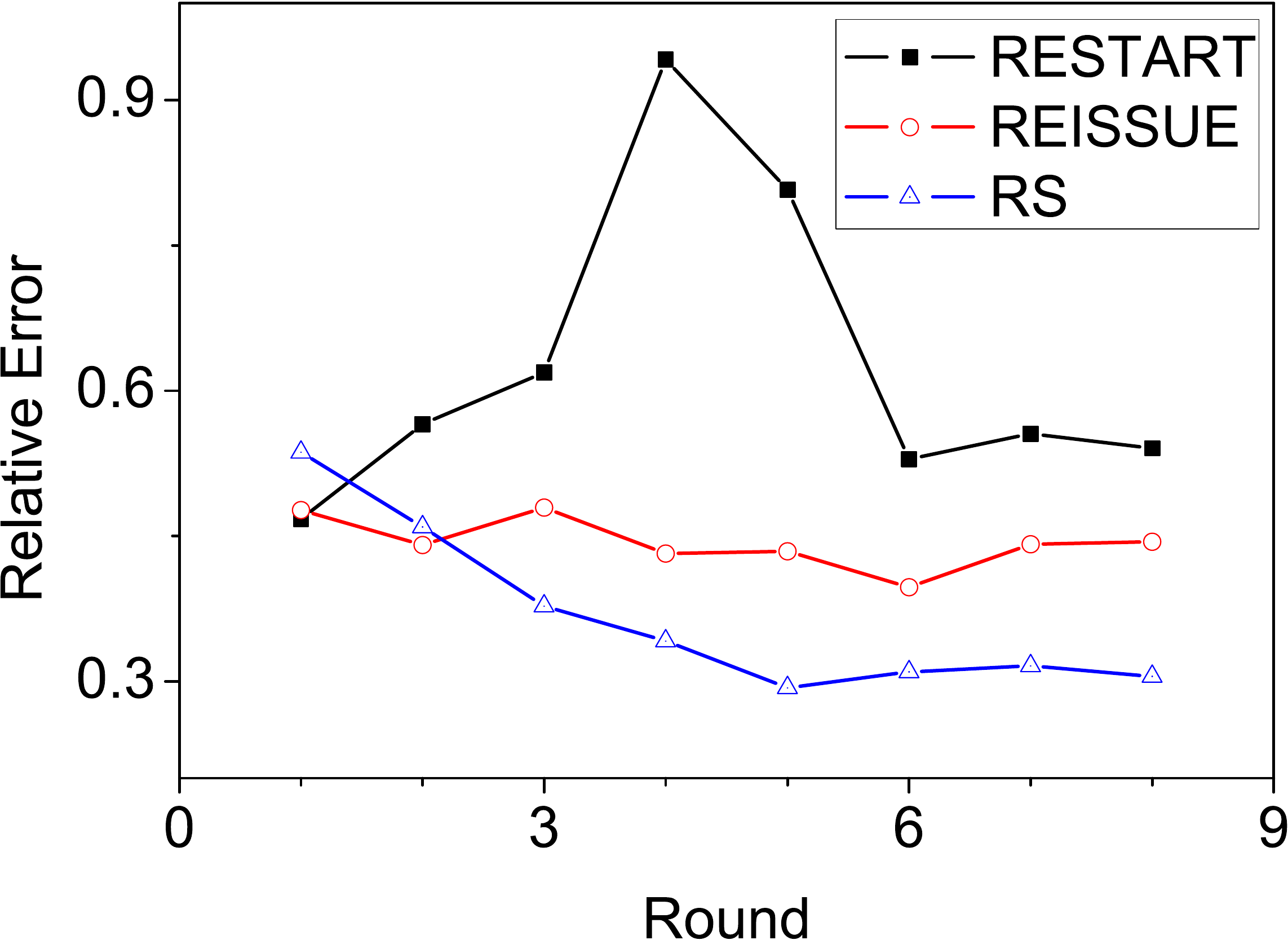}
\vspace{-5.5mm}\caption{Big Change}
\label{fig5}
\end{minipage}
\begin{minipage}[t]{0.23\linewidth}
\centering
\includegraphics[width = 40mm]{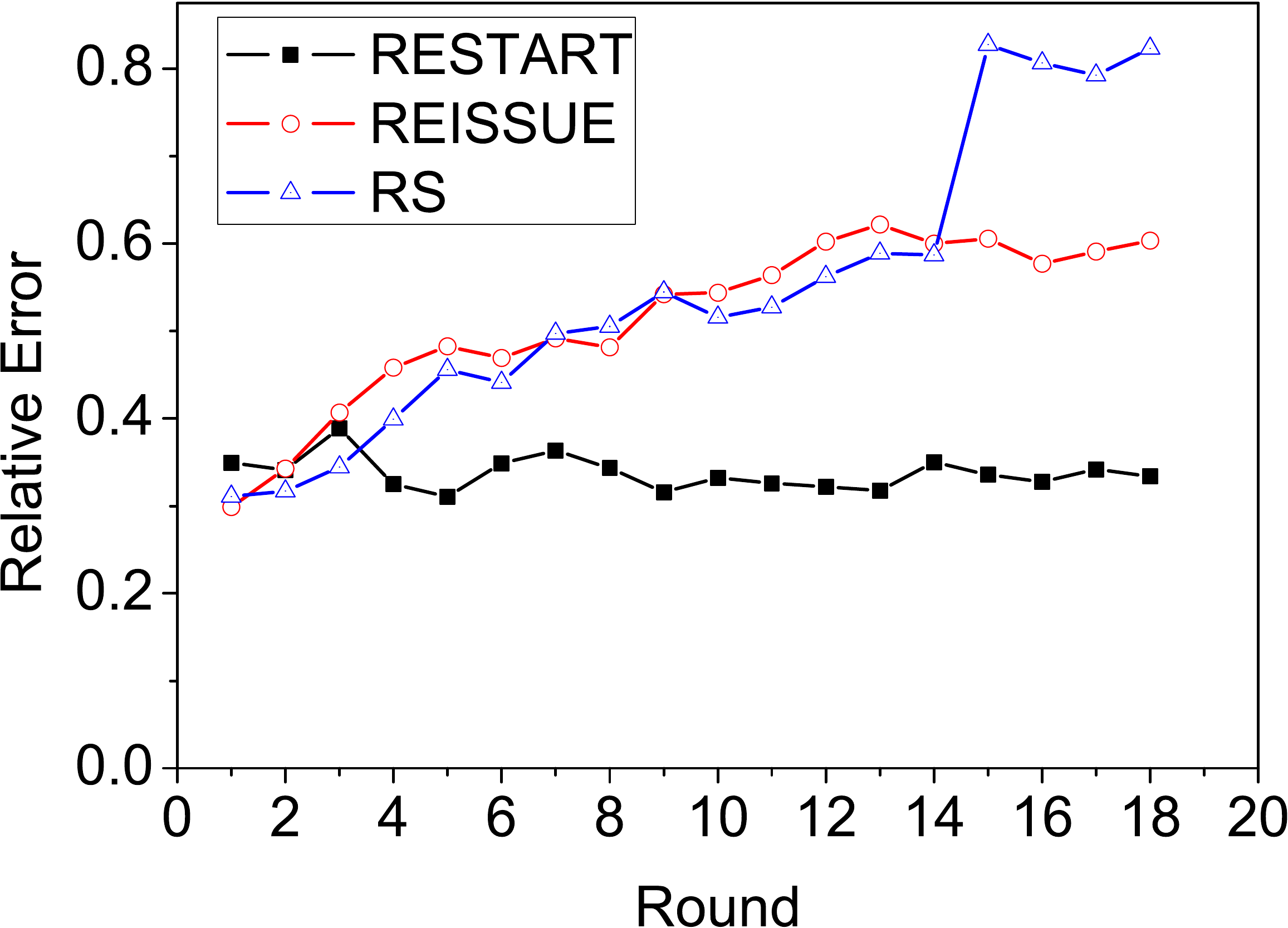}
\vspace{-2mm}\caption{Big Change ($k = 1$)}
\label{fig6}
\end{minipage}
\hspace{3mm}
\begin{minipage}[t]{0.23\linewidth}
\centering
\includegraphics[width = 40mm]{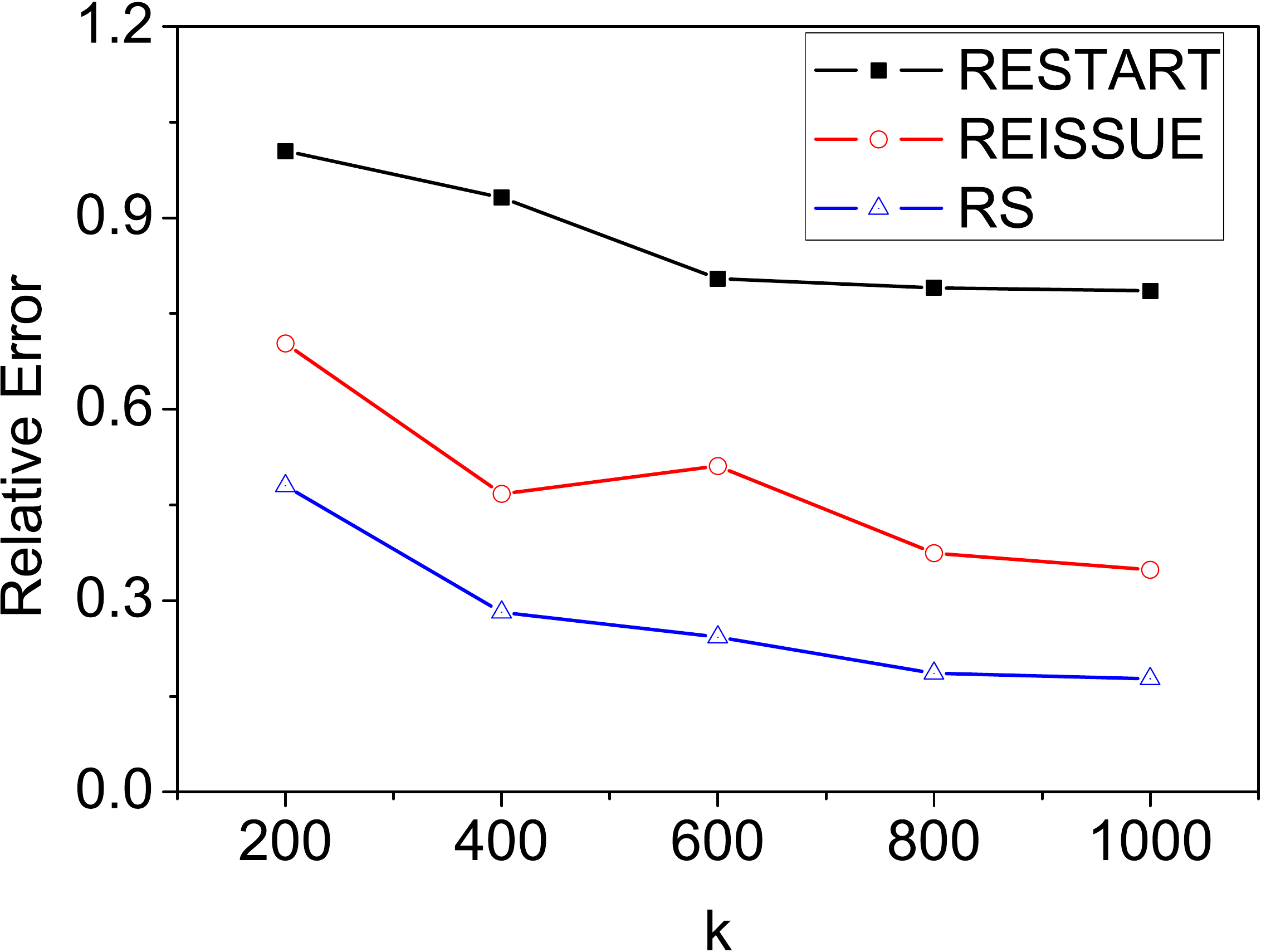}
\vspace{-2mm}\caption{Effect of $k$}
\label{fig7}
\end{minipage}
\hspace{3mm}
\begin{minipage}[t]{0.22\linewidth}
\centering
\includegraphics[width = 40mm]{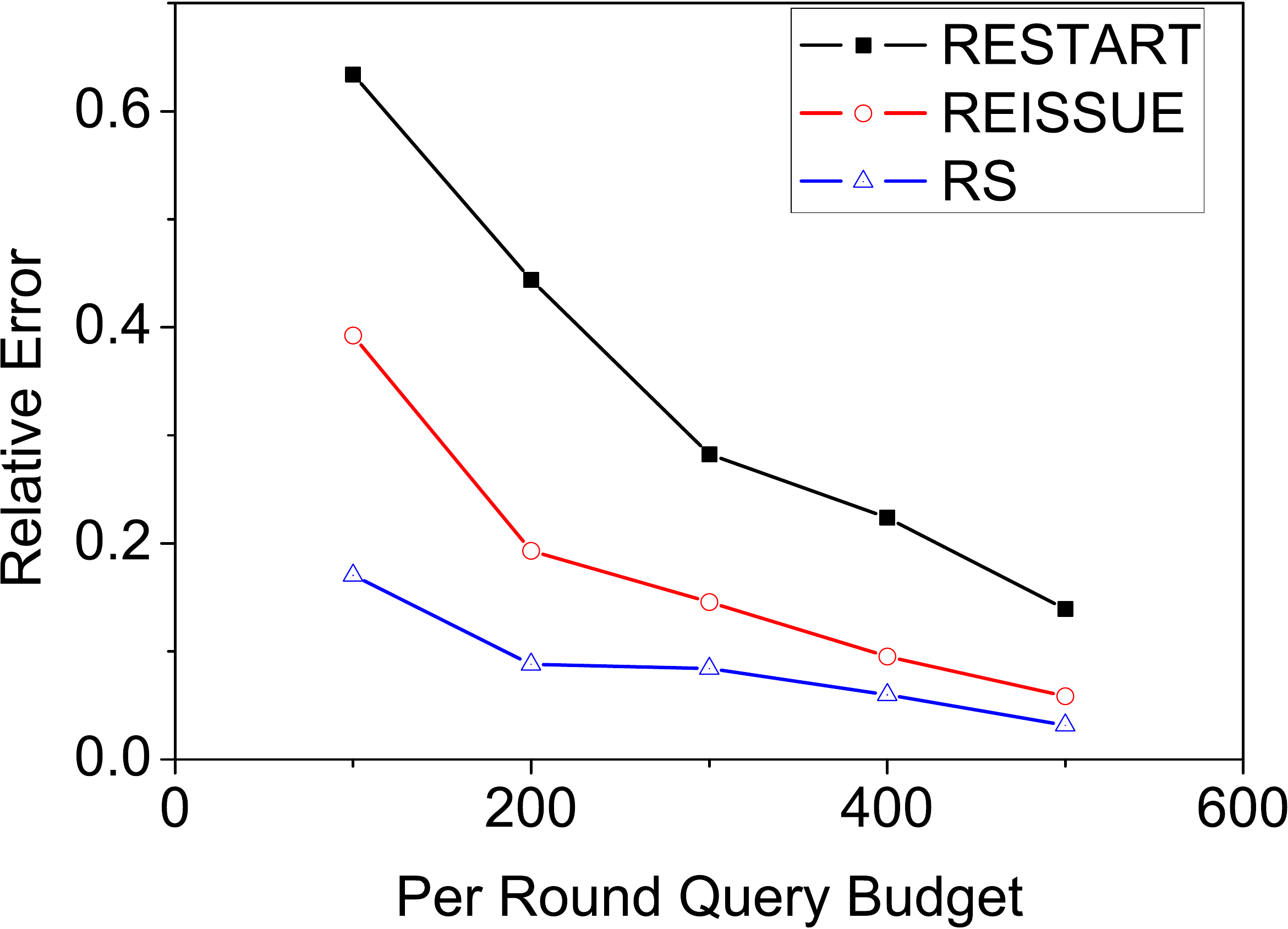}
\vspace{-5.5mm}\caption{Query Budget}
\label{fig8}
\end{minipage}
\hspace{-2mm}
\end{figure*}

\subsection{Experimental Setup}
\noindent{\bf Datasets:} We tested our algorithm over three datasets: (1) a real-world (categorical) web database Yahoo!~Autos to which we have full offline access, (2) a live experiment on Amazon.com, and (3) a live experiment on eBay.com.

Yahoo!~Autos dataset \cite{DJJ+10} is a snapshot of the Yahoo!~Auto database and contains 188,917 unique tuples and 38 attributes, with attribute domain sizes ranging from 2 to 38. The default insertion/deletion schedule we used is to start with 170,000 (chosen uniformly at random without replacement) tuples at the beginning and, for each round, insert 300 randomly selected tuples not currently in the database, and delete 0.1\% of the existing tuples.  The default search interface is top-1000, and the default query budget per round is $G = 100$.
Since we had complete access to the dataset, we were able to build locally an exact implementation of the top-$k$ web interface and precisely compute the aggregate query answers and, thereby, the estimation error generated by our algorithms.

We shall describe detailed settings for the live experiments at the end of this section.

\vspace{1mm}
\noindent{\bf Algorithms Evaluated:} We tested three algorithms discussed in the paper: the baseline RESTART-ESTIMATOR (repeatedly execute \cite{DJJ+10} for each round) and our REISSUE-ESTIMATOR and RS-ESTIMATOR. We abbreviate their names as RESTART, REISSUE, and RS in most of the section. RESTART and REISSUE are parameter-free algorithms (other than the database-controlled parameter of query budget per round $G$). RS has one parameter $\varpi$ which is the number of bootstrapping drill downs performed for each historic round. We set the default setting to $\varpi = 10$ in the experiments.

\vspace{1mm}
\noindent{\bf Performance Measures:} For query cost, we focused on the number of queries issued through the web interface of the hidden database. For estimation accuracy, we measured the relative error (i.e., $|\tilde{\theta} - \theta| / |\theta|$ for an estimator $\tilde{\theta}$ of aggregate $\theta$), and also plotted the error bars for raw estimations.

\begin{figure*}[ht]
\begin{minipage}[t]{0.23\linewidth}
\centering
\includegraphics[width = 40mm]{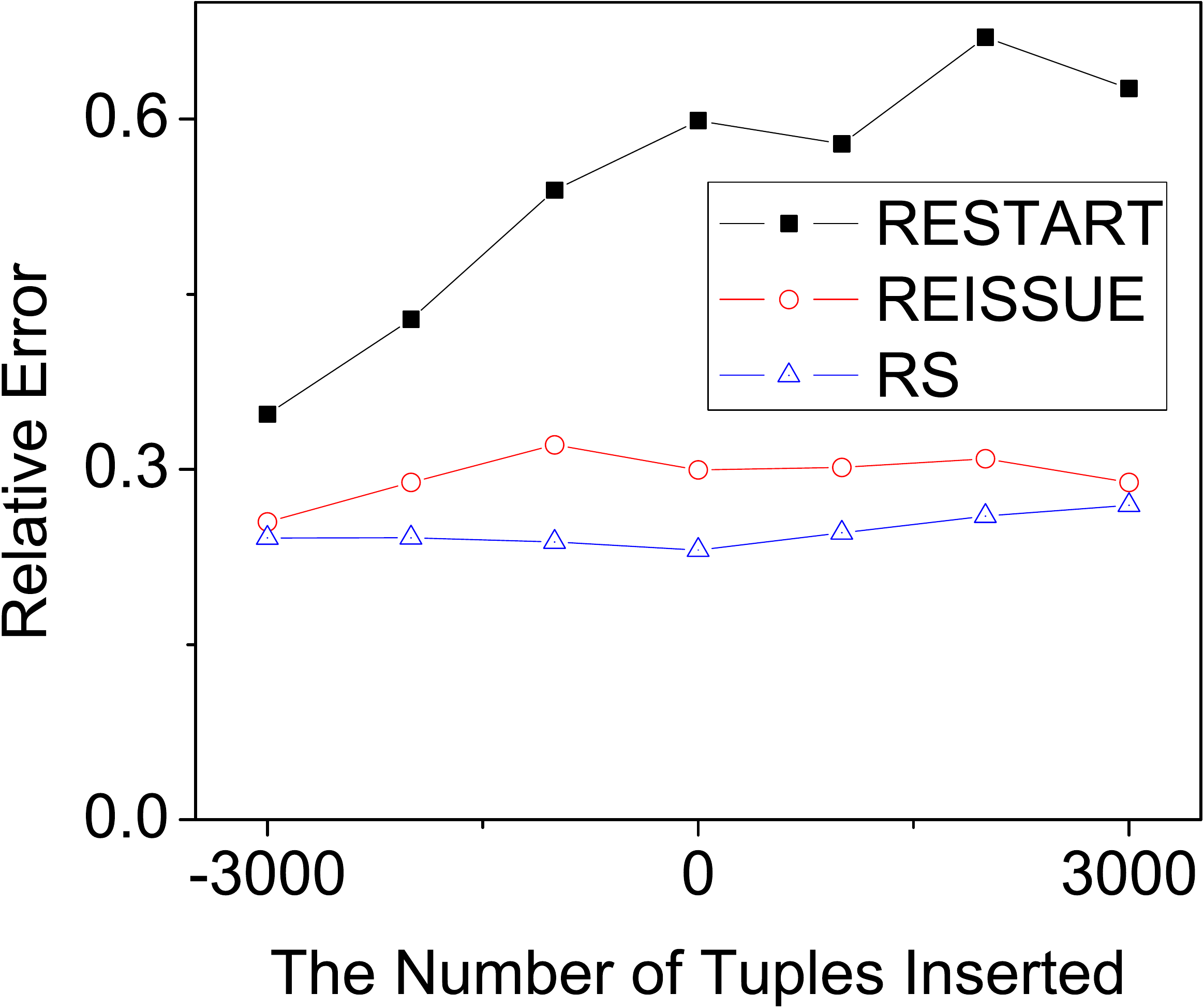}
\vspace{-2mm}\caption{Insertion/Deletion}
\label{fig9}
\end{minipage}
\hspace{3mm}
\begin{minipage}[t]{0.23\linewidth}
\centering
\includegraphics[width = 40mm]{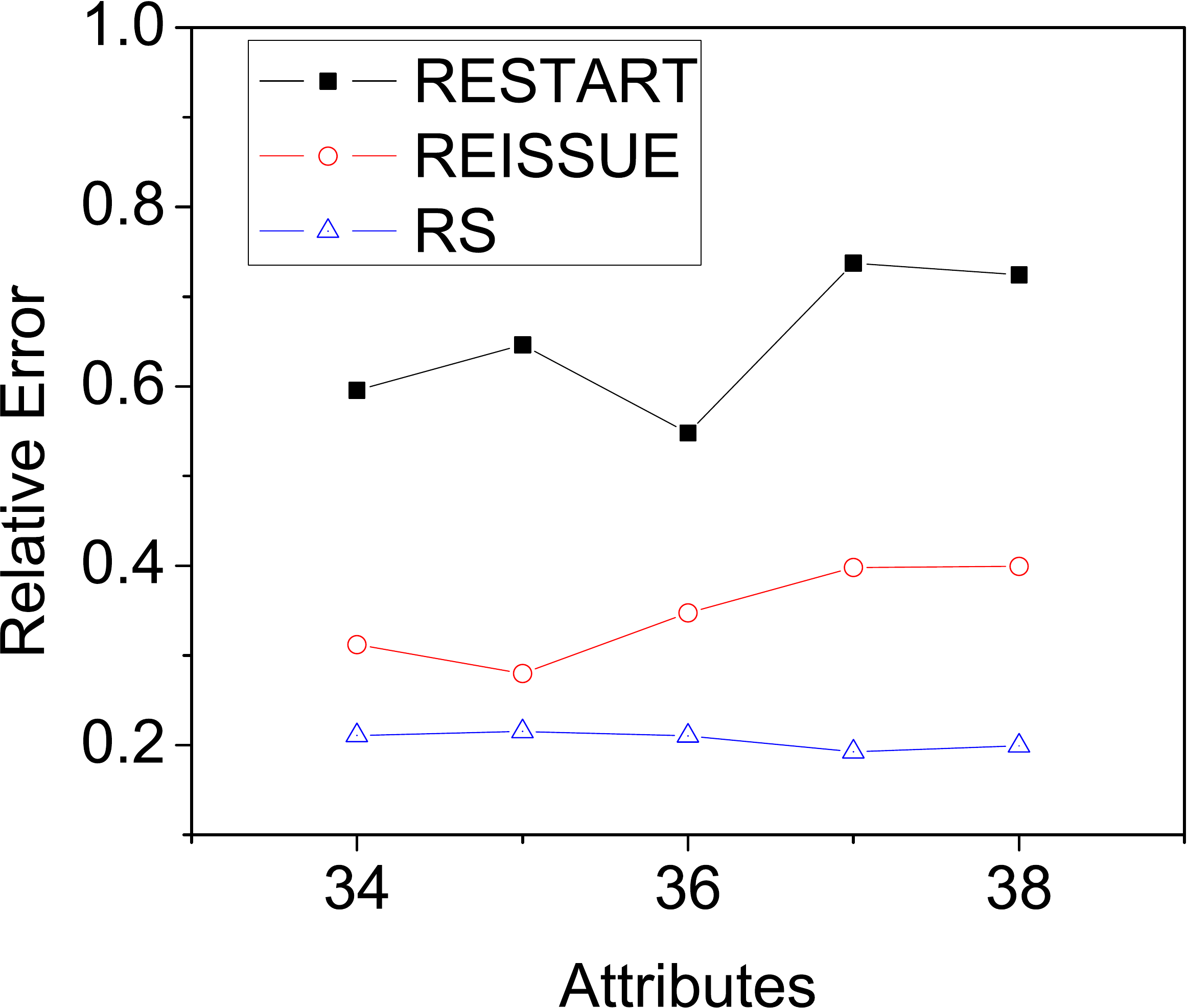}
\vspace{-2mm}\caption{Effect of $m$}
\label{fig10}
\end{minipage}
\hspace{3mm}
\begin{minipage}[t]{0.22\linewidth}
\centering
\includegraphics[width = 40mm]{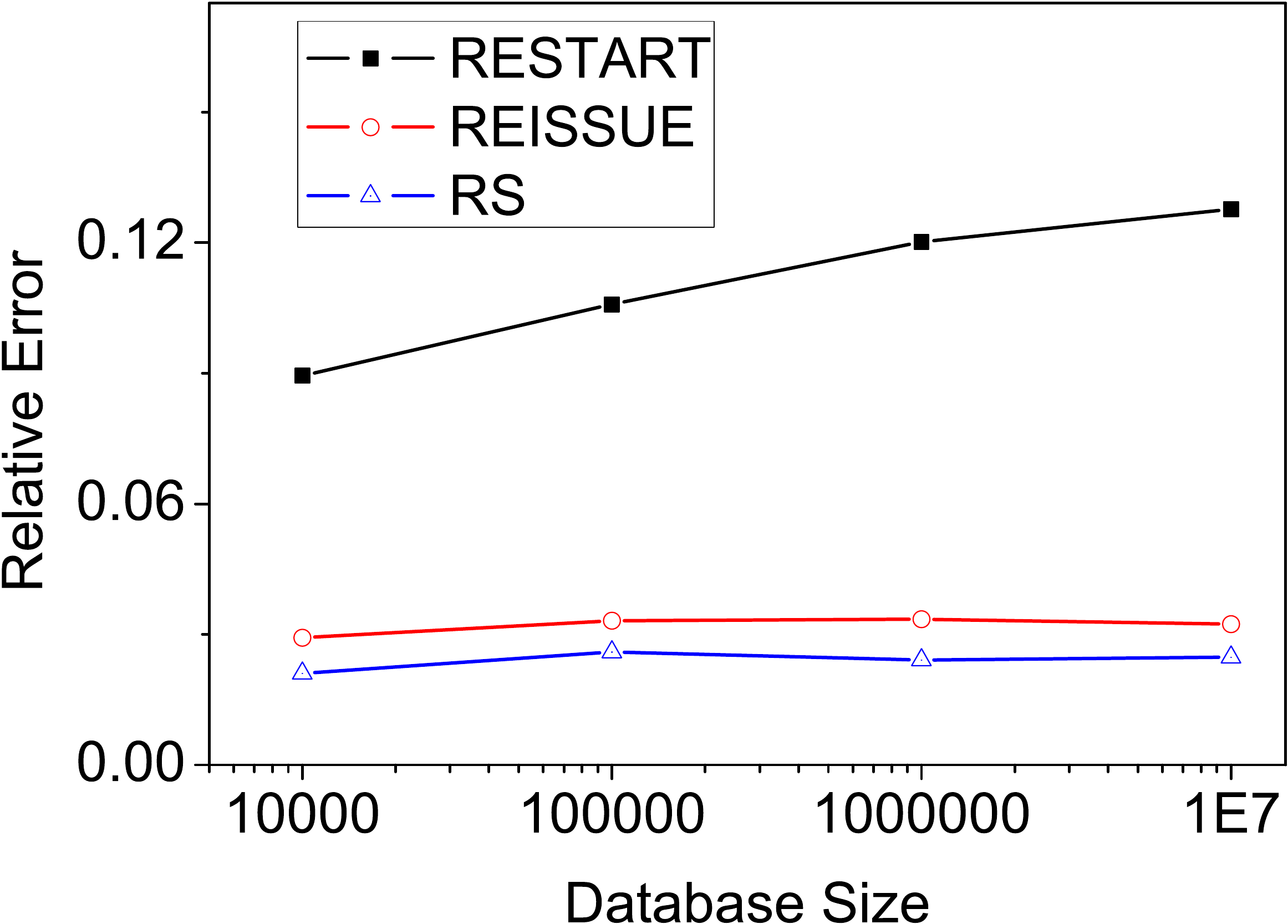}
\vspace{-5.5mm}\caption{Effect of $|D_1|$}
\label{fig11}
\end{minipage}
\hspace{3mm}
\begin{minipage}[t]{0.22\linewidth}
\centering
\includegraphics[width = 40mm]{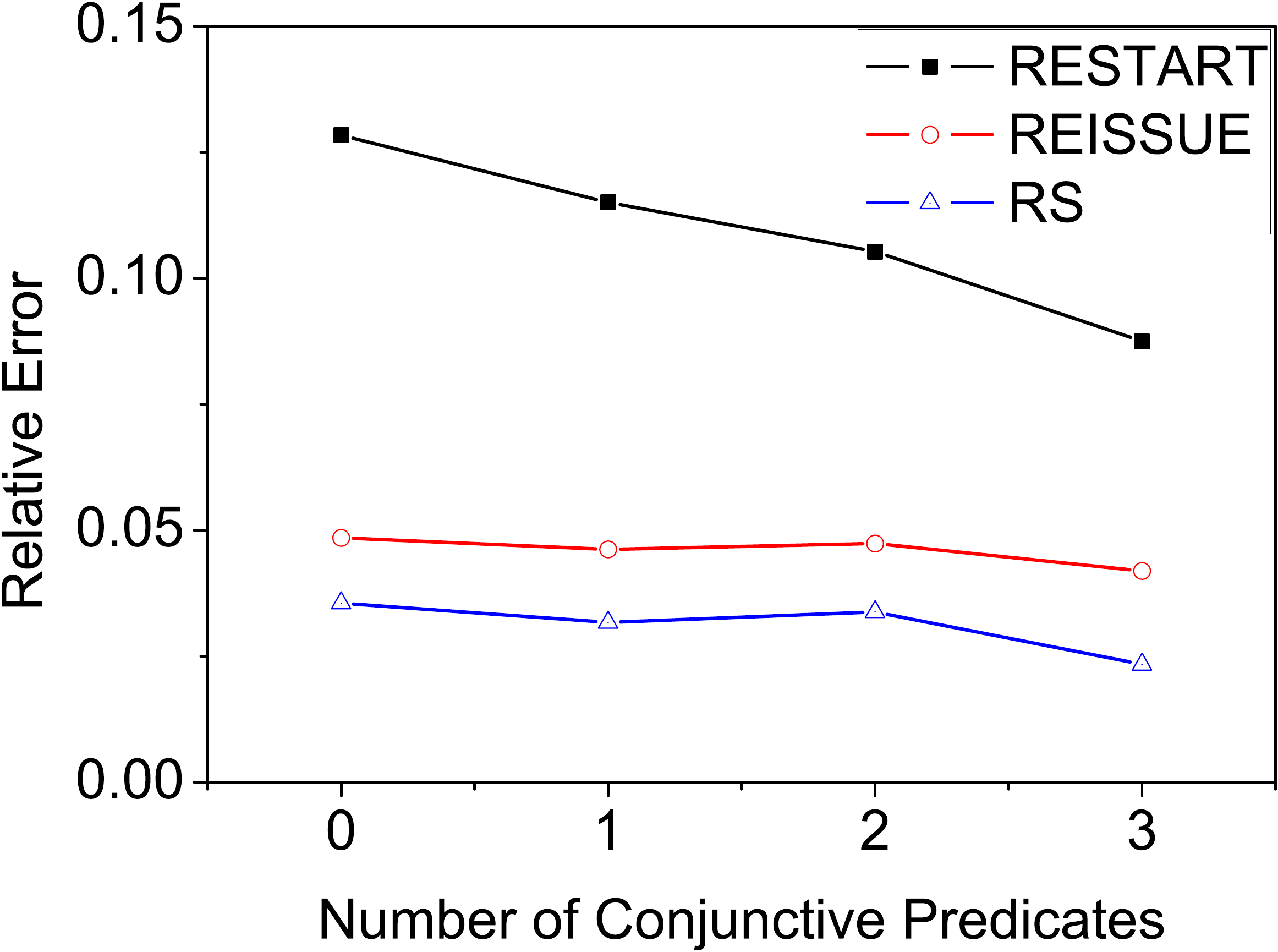}
\vspace{-5.5mm}\caption{SUM w/ Conditions}
\label{fig12}
\end{minipage}
\hspace{-2mm}
\end{figure*}

\begin{figure*}[ht]
\begin{minipage}[t]{0.23\linewidth}
\centering
\includegraphics[width = 40mm]{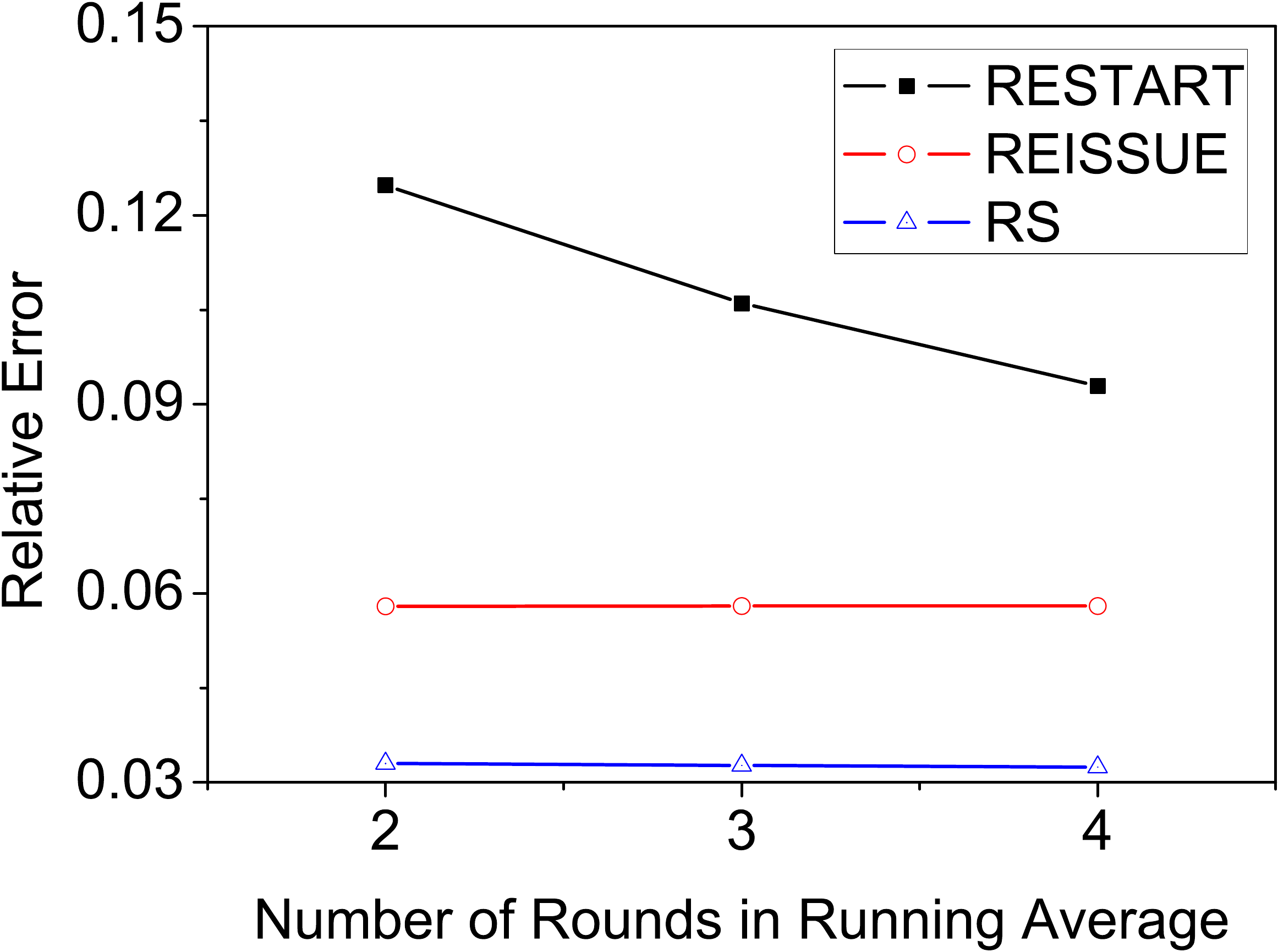}
\vspace{-2mm}\caption{Running Average}
\label{fig13}
\end{minipage}
\hspace{3mm}
\begin{minipage}[t]{0.23\linewidth}
\centering
\includegraphics[width = 40mm]{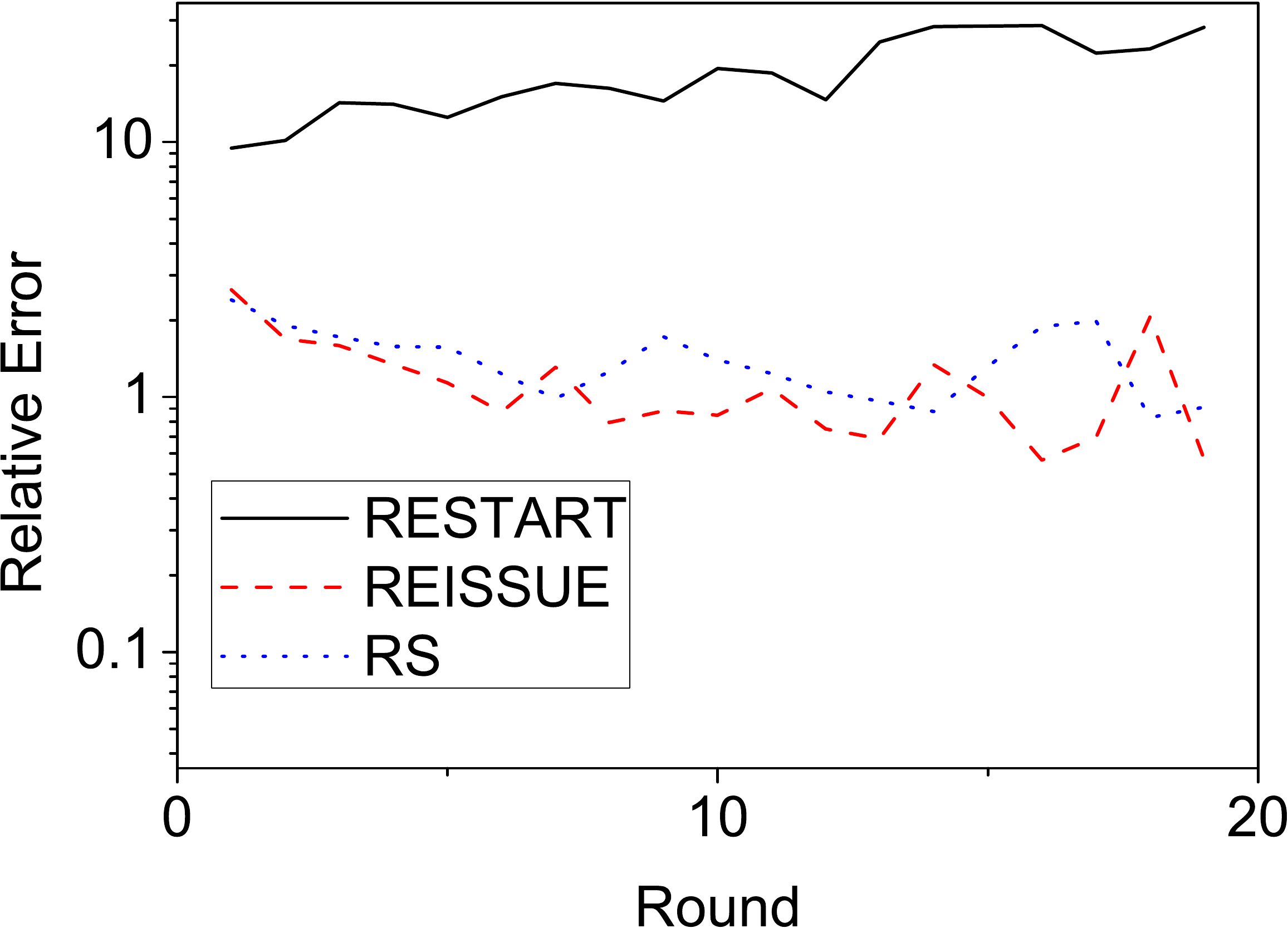}
\vspace{-2mm}\caption{Small Change}
\label{fig14}
\end{minipage}
\hspace{3mm}
\begin{minipage}[t]{0.23\linewidth}
\centering
\includegraphics[width = 40mm]{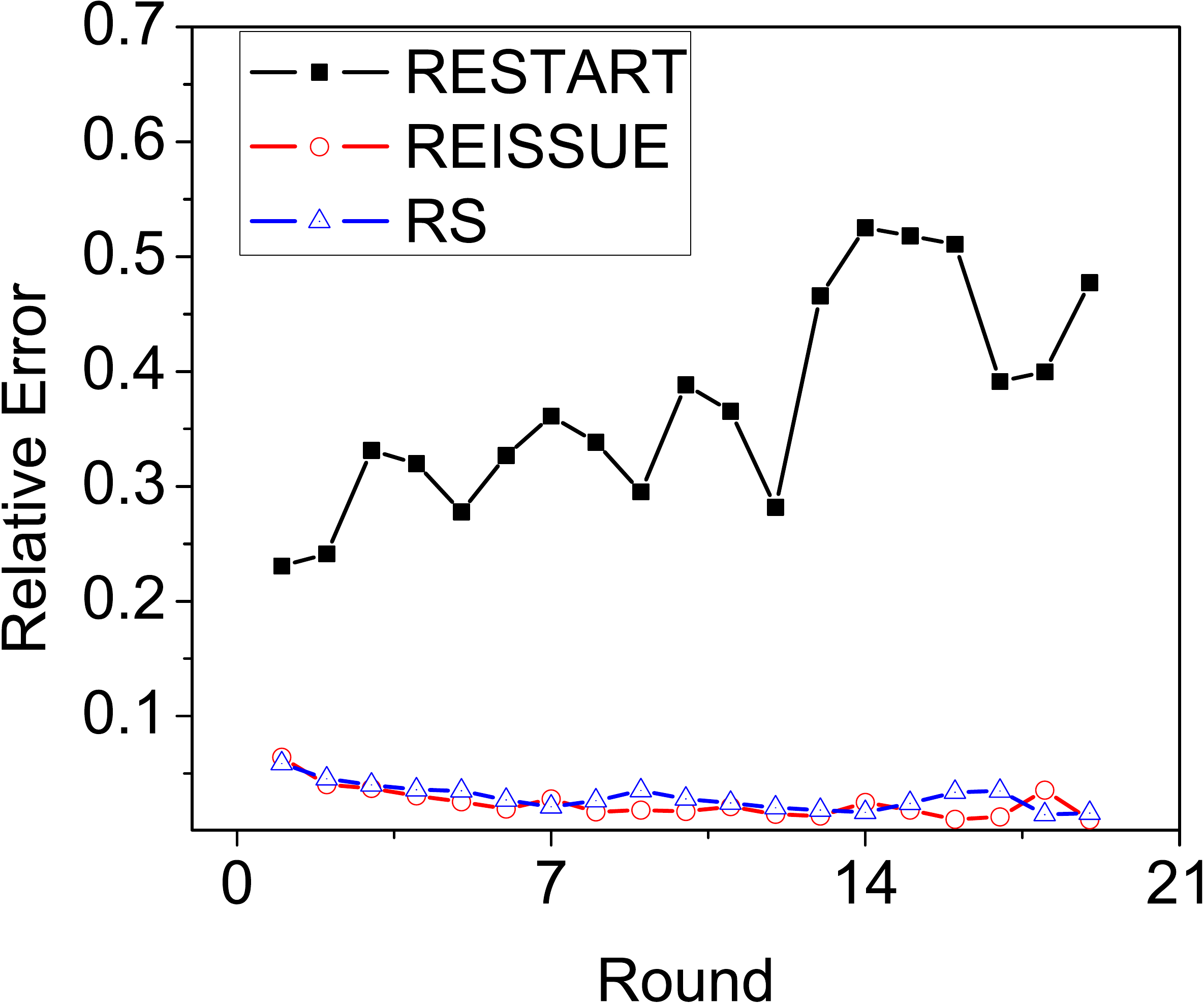}
\vspace{-2mm}\caption{Small Change (Absolute Estimation)}
\label{fig14New}
\end{minipage}
\hspace{3mm}
\begin{minipage}[t]{0.22\linewidth}
\centering
\includegraphics[width = 40mm]{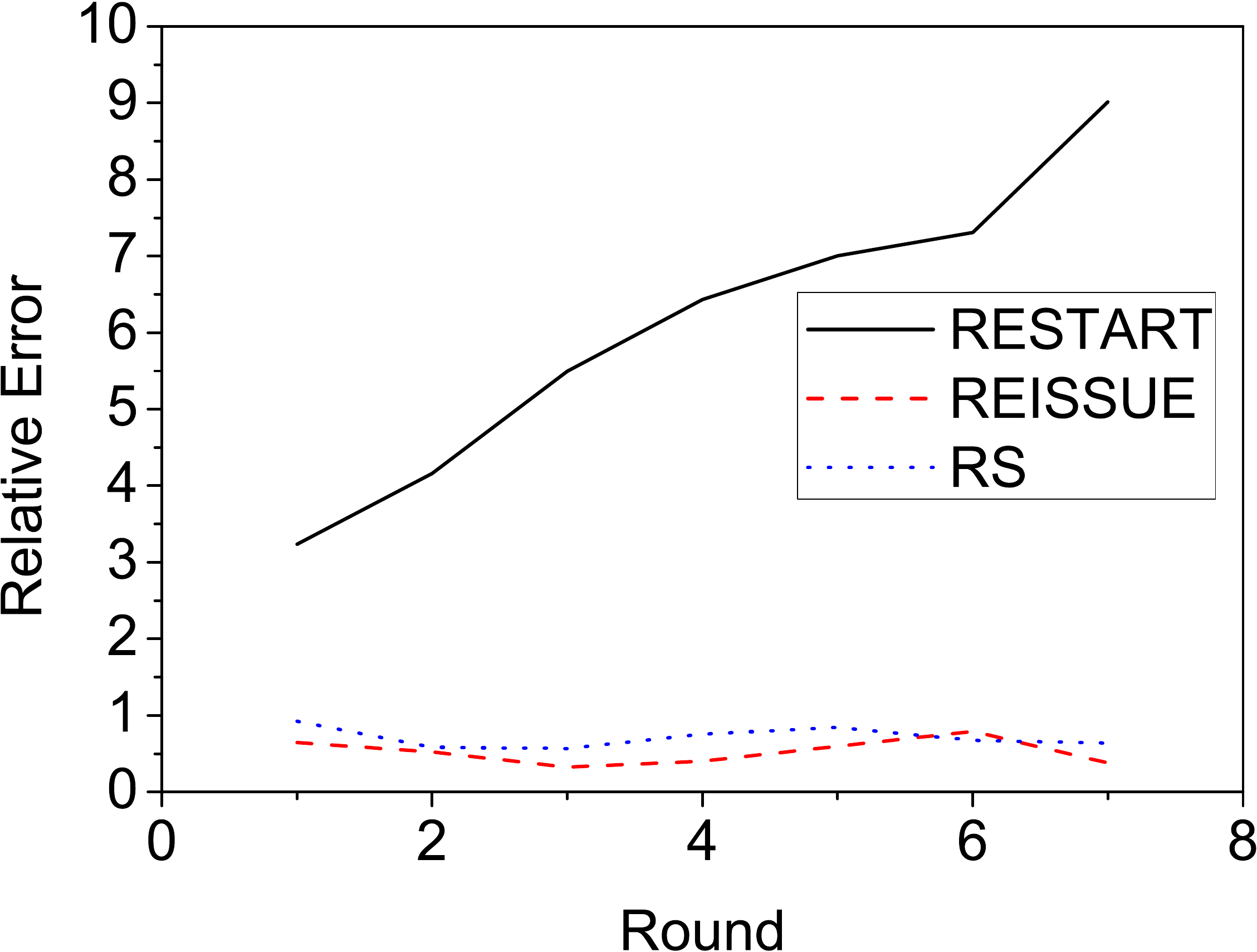}
\vspace{-5.5mm}\caption{Big Change}
\label{fig15}
\end{minipage}
\hspace{-2mm}
\end{figure*}

\begin{figure*}[ht]
\begin{minipage}[t]{0.23\linewidth}
\centering
\includegraphics[width = 38mm]{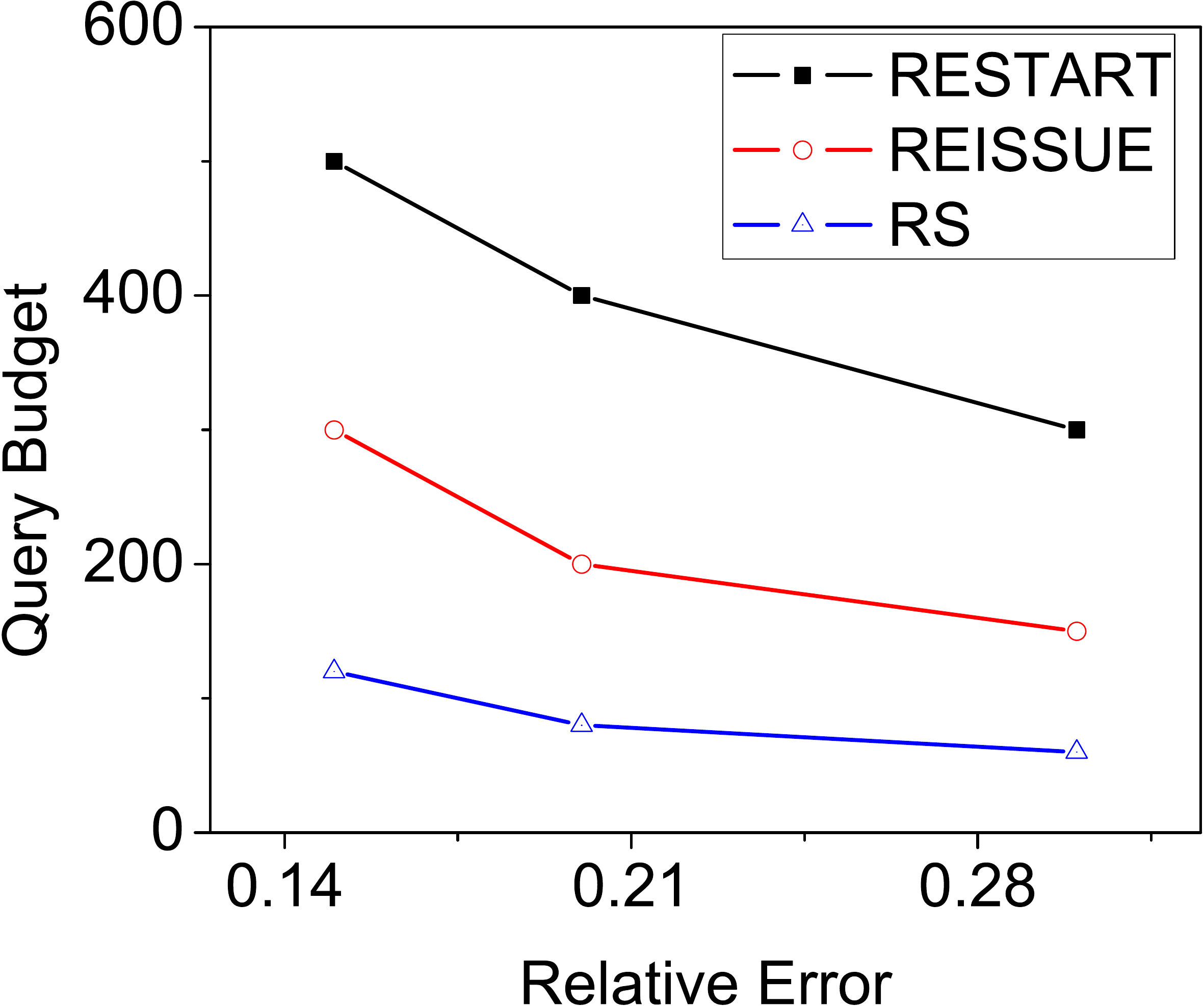}
\vspace{-2mm}\caption{Accuracy Vs Budget}
\label{fig:accuracyVsQueryBudget}
\end{minipage}
\hspace{3mm}
\begin{minipage}[t]{0.22\linewidth}
\centering
\includegraphics[width = 40mm]{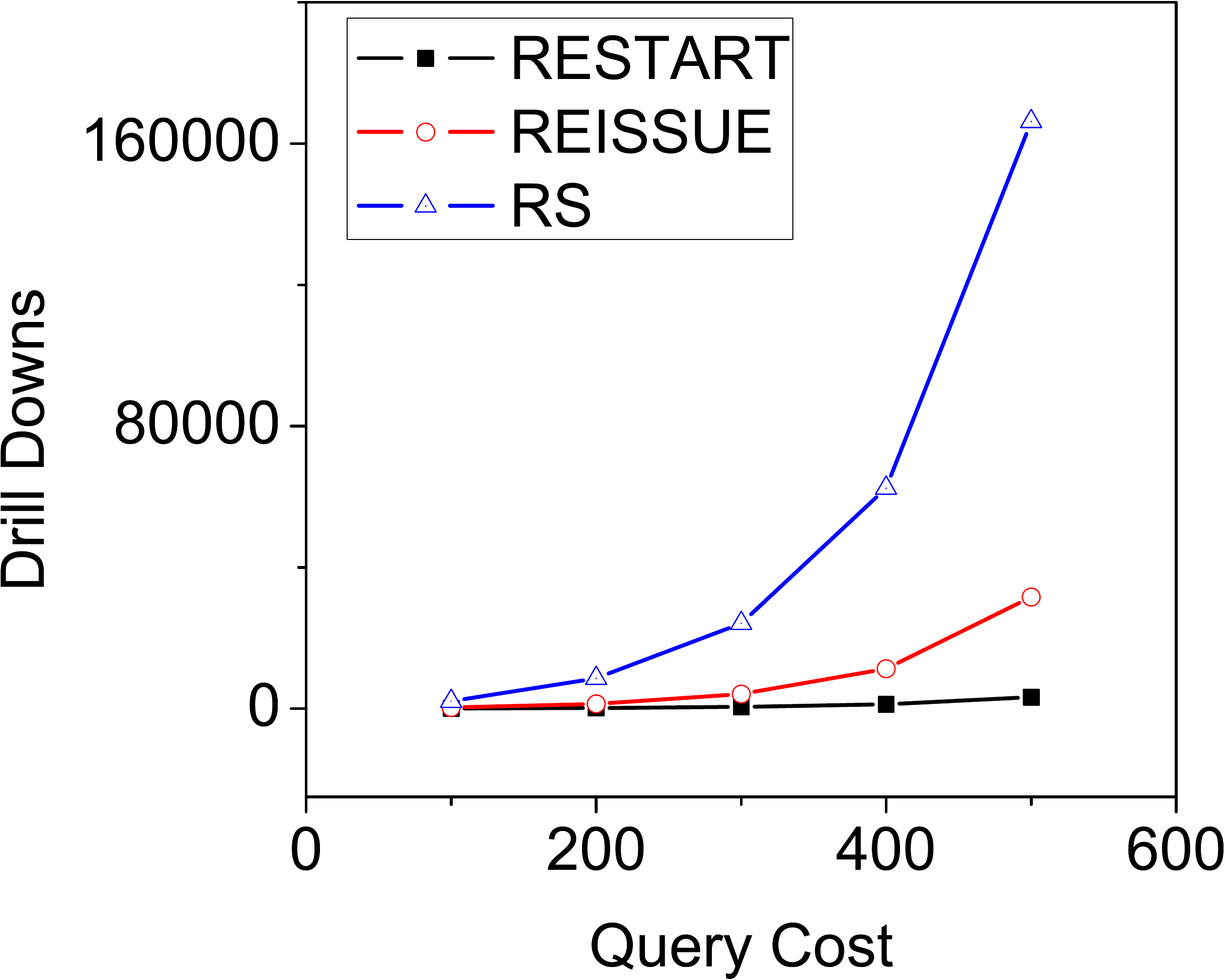}
\vspace{-5.5mm}\caption{Drill Downs}
\label{fig:drillDownsVsQC}
\end{minipage}
\hspace{3mm}
\begin{minipage}[t]{0.22\linewidth}
\centering
\includegraphics[width = 40mm]{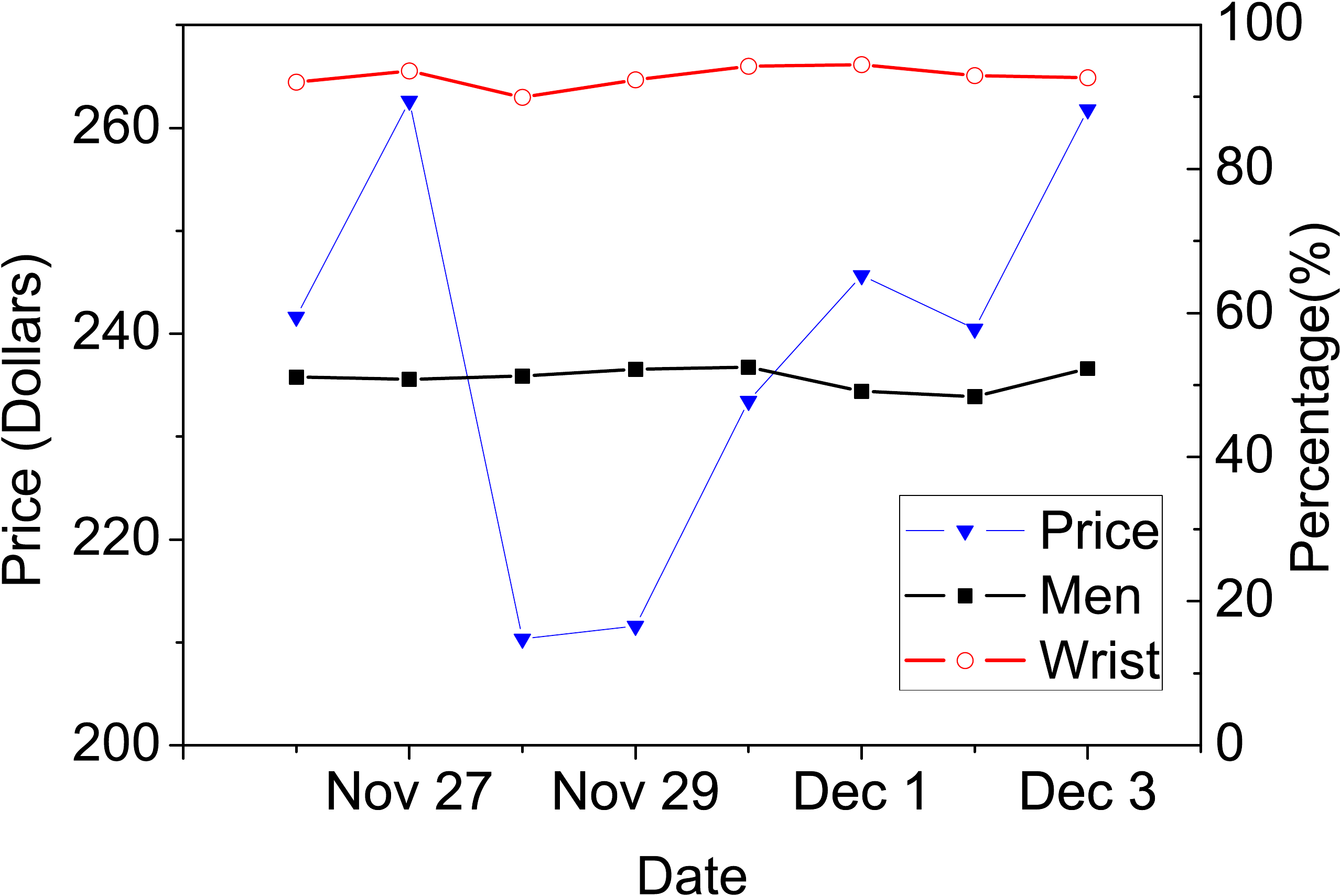}
\vspace{-5.6mm}\caption{\emph{Amazon.com}}
\label{fig16}
\end{minipage}
\hspace{3mm}
\begin{minipage}[t]{0.23\linewidth}
\centering
\includegraphics[width = 40mm]{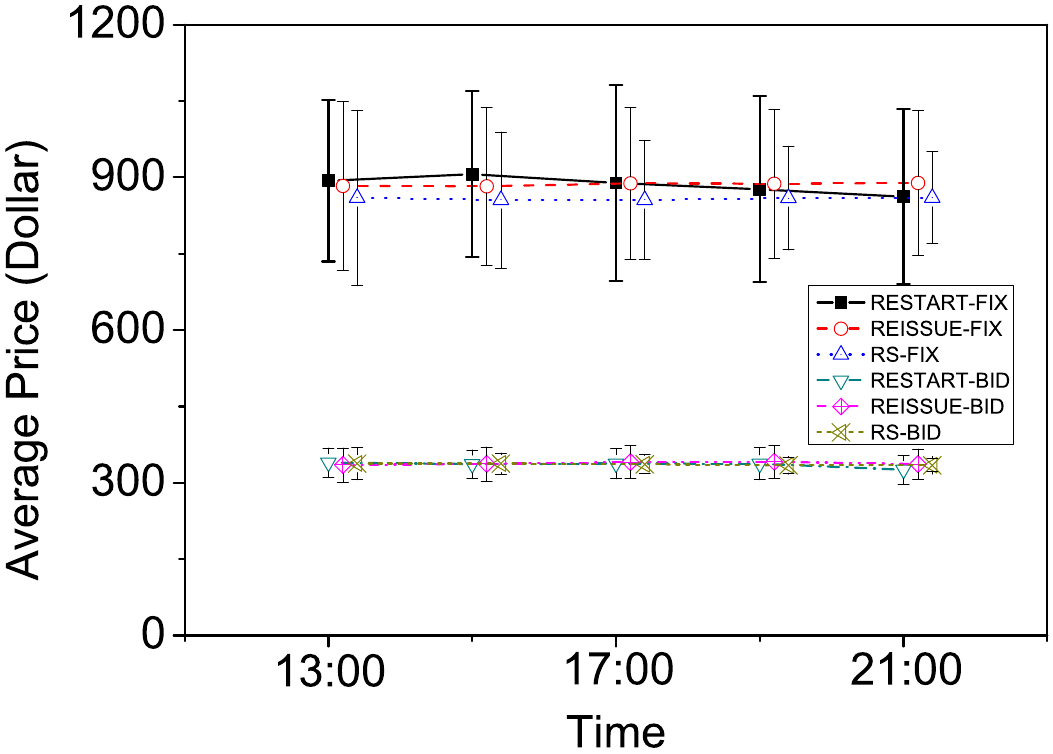}
\vspace{-2mm}\caption{eBay.com}
\label{figebay}
\end{minipage}
\hspace{-2mm}
\end{figure*}

\subsection{Experimental Results}
\noindent {\bf Single-round aggregates: } We started by comparing the performance of all three algorithms for estimating single-round aggregates over the Yahoo!~Autos dataset under the default insertion/deletion schedule and a per-round query budget of 500.  Figures \ref{fig1} depicts how the relative error on estimating COUNT(*) changes round after round, while Figure \ref{fig2} depicts the error bars of raw estimations. One can see from the figure that, as we proved in the paper, all three algorithms produce unbiased estimations, with RS having the smallest variance (i.e., shortest error bar). In contrast to RESTART, both REISSUE and RS provide more accurate estimations in latter rounds by leveraging historic query answers.

In Figure \ref{fig3}, we tested the real-world performance of REISSUE and RS when there are intra-round updates to the Yahoo!~Autos database. Specifically, we consider a worst-case scenario where our algorithms are executed every hour and take the entire hour to finish.  Within each hour, a tuple is inserted to the database every 12 seconds, while an existing tuple is deleted every 21 seconds (i.e., evenly distributed over time according to the default insertion/deletion schedule). As one can see from the figure, even in this case, the estimation accuracy of REISSUE and RS is still very close to the round-based model without any intra-round updates, consistent with our discussions in \S\ref{sec:dum}.

We then studied how the algorithms perform under extreme-case scenarios, in order to test our theoretical analysis in \S\ref{sec:rie} and \ref{sec:sam}. We started with the case when the database barely changes over time - adding one tuple per round to the real dataset. Figure \ref{fig4} depicts the results. One can see that while REISSUE significantly outperforms RESTART, the reduction of its relative error ``tapers off'' over time and remains at around 0.3. On the other hand, the estimation error produced by RS keeps decreasing (to around 0.2 after 50 rounds), consistent with our theoretical analysis in \S\ref{sec:sam}.

We also tested the worst-case scenario for our algorithms - when the database undergoes dramatic changes. Specifically, we started with 100,000 tuples in the real dataset and, for each round, added 10000 tuples and deleted 5\% of existing tuples. Figure \ref{fig5} shows that our algorithms still outperform the baseline significantly. Nonetheless, if we further change $k$ to 1, Figure \ref{fig6} shows that RESTART performs better - as predicted by
Theorem~\ref{thm:nth}.

Finally, we tested how the performance comparison changes with various parameters of the database and the search interface. Figures \ref{fig7} to \ref{fig11} tested $k$, the query budget $G$ per round, the number of inserted/deleted tuples per round, the number of attributes $m$, and the starting database size $|D_1|$, respectively. Figures \ref{fig7} and \ref{fig8} depicts the estimation error after 50 rounds on the real dataset. And in Figure \ref{fig11}, to enable the scalability test up to 10 million tuples, we set $m = 50$. In the following, we discuss a few observations from these figures:

Note from Figure \ref{fig7} that the relative error is smaller when $k$ is bigger, consistent with our theoretical analysis in \S\ref{sec:ta}. For query budget $G$, observe from Figure~\ref{fig8} that while all three algorithms produce smaller errors with a larger budget, RS remains the top choice throughout (though its advantage over REISSUE diminishes with a larger $G$ because updating only takes a small part of the budget anyway). Figure~\ref{fig9} compares cases when changes to the database range from deleting 30 (randomly chosen existing) tuples per round to inserting 30 new tuples. One can see from the figure that RS significantly outperforms RESTART in all cases. REISSUE, on the other hand, suffers when more than half of the database (30 $\times$ 100 = 3000 out of 5000) are deleted at the end of 100 rounds, consistent with the conclusion from
Theorem~\ref{thm:nth}.

Figure~\ref{fig10} shows that the performance for all algorithms is independent of $m$ - again, consistent with the theoretical results. The results from Figure~\ref{fig11} indicate a bigger relative error of RESTART for a larger database but a static accuracy of our algorithms.
Notice that the performance improvement by our algorithms {\em widens} when the database size increases.

%
Figure~\ref{fig12} depicts the relative error after 100 rounds for estimating SUM aggregates with 0, 1, 2, and 3 conjunctive selection conditions. One can see from the figures that both RS and REISSUE significantly outperform RESTART in all cases, with RS producing even smaller errors than REISSUE. Also, observe from Figure~\ref{fig12} that the more selective the aggregate is, the lower the relative error will be.

\vspace{1mm}
\noindent {\bf Trans-round Aggregates:} For trans-round aggregates, we started with testing the {\em running average count} - i.e., AVG$(|D_i|, |D_{i-1}|, \ldots)$. Figure\ref{fig13} depicts the comparison between the three algorithms over the real dataset when the COUNT of the last 2, 3, and 4 rounds are taken as inputs to the AVG function. One can see that RS has the best performance in all cases, while REISSUE and RS both significantly outperform RESTART.

We then tested in Figure \ref{fig14} another trans-round aggregate over the real-world dataset: the change of database size from last round (i.e., $|D_i| - |D_{i-1}|$). In this test, the database undergoes minor changes at each round - specifically, with 3000 tuples being inserted and 0.5\% being removed. We would like to note that this is an extreme-case scenario where the database undergoes very minor (less than 1\%)
changes, and the aggregate monitoring task tries to measure exactly how much change has occurred. Figure~\ref{fig14New} shows the absolute estimations produced by all algorithms which shows the stark superiority of our algorithms over RESTART-ESTIMATOR. In comparison, Figure \ref{fig15} depicts the same aggregate when the database is substantially changed round after round - with 10000 tuples inserted and 5\% removed. One can make a number of interesting observations from the figures: First, RS and REISSUE outperform RESTART by orders of magnitude when the database change is minor - confirming previous discussions that RESTART yields extremely poor results when the size difference is small. When the database undergoes major changes, RS and REISSUE converge to the same performance, again confirming previous analysis, and both still hold significant superiority over RESTART. Finally, the relative error produced by RESTART keeps increasing over time (with a larger database size), while REISSUE and RS have decreasing relative errors - again consistent with our theoretical analysis in \S\ref{sec:rie} and \ref{sec:sam}. 
The anomalous result (where REISSUE slightly outperforms RS) is due to the fact that the new drill downs do not reduce relative error.
According to our theoretical analysis, when the new drill downs yield much higher variance (and therefore error) than updating the old ones, the performance of RS-ESTIMATOR is essentially reduced to that of REISSUE-ESTIMATOR.

Figure~\ref{fig:accuracyVsQueryBudget} shows the efficiency benefits achieved by our algorithms. 
Instead of fixing the query budget, we ran our algorithms
till the estimation was within a relative error of 0.15, 0.2 and 0.3.
The experimental results show that lower relative errors require a higher query cost.
However, our algorithms requires substantially less queries than RESTART-ESTIMATOR to achieve same relative error.
Figure~\ref{fig:drillDownsVsQC} measures how REISSUE- and RS-ESTIMATOR achieve query savings
enabling us to perform additional drill downs for the same query budget.
This experiment was conducted over Yahoo! Autos dataset over 50 rounds.
As expected, our algorithms achieve significant savings over RESTART-ESTIMATOR due to their ability to leverage
historical query results.


\vspace{1mm}
\noindent {\bf Live Experiments:} We conducted live experiments over Amazon.com and eBay.com, respectively, in order to demonstrate the effectiveness of our algorithms in real-world settings. For Amazon.com, we conducted the experiment during Thanksgiving week, 2013, using the Amazon Product Advertising API with $k = 100$ and a query budget of 1,000 queries per day. Specifically, we monitored various aggregates over all watches sold by Amazon.com.  Admittedly, we have no access to the ground truth in this experiment, and thus cannot measure the exact accuracy of the aggregate estimations we generate. Nonetheless, as one can see from Figure~\ref{fig16}, the average price estimation we generated shows a sharp drop ($\sim$\$50) on Thanksgiving day (Nov 28) and the Black Friday (Nov 29), consistent with the common sense that most sellers are running promotions during this period.  On the other hand, two other aggregates we tracked, the percentage of watches for men and the percentage of wrist watches, barely changed during this time period, once again consistent with the common sense.

For eBay.com, we conducted the experiment on Monday, February 24, from 1pm to 9pm EST, using the eBay Finding API with $k = 100$ and a query budget of 250 queries per hour (for each algorithm we tested). Figure~\ref{figebay} depicts the experimental results. Specifically, we monitored the average (current) price of all women's wrist watches which (1) offer a ``Buy It Now'' option (i.e., with FixedPrice returned by the API, represented as -FIX in the figure), and (2)  offer a bidding option (represented as -BID in the figure). We tested in this experiment all three algorithms discussed in the paper - i.e., RESTART-, REISSUE-, and RS-ESTIMATOR.

One can make two observations from the figure. First, the average price of Buy-It-Now items is significantly higher than that of the bidding ones, consistent with the intuition that (1) Buy-It-Now items are generally more expensive, (2) while a price snapshot of an item for bid is likely lower than the final transaction price, the price snapshot of a Buy-It-Now item is usually the exact transaction price. The second observation is on the performance comparison between REISSUE-/RS-ESTIMATOR and the RESTART baseline.  One can see that, while all three algorithms perform similarly at the beginning (around 1pm), our REISSUE and RS algorithms perform significantly better than RESTART as time goes by, especially for Buy-It-Now items, consistent with our theoretical predictions. Finally, note that the performance improvement of REISSUE and RS over RESTART is more significant for Buy-It-Now items than bidding ones. This is because bidding items are updated much more frequently than Buy-It-Now items, once again consistent with our theoretical results that the less the database changes, the better REISSUE and RS will perform in comparison with  RESTART.

\vspace{-2mm}
\section{Related Work}
\noindent {\bf Information Integration and Extraction for Hidden databases:} A significant body of research has been done in this field - see tutorials \cite{CC:06, DRV:06}. Due to limitations of space, we only list a few closely-related work: \cite{RG01} proposes a crawling solution. Parsing and understanding web query interfaces was extensively studied (e.g., \cite{DKYL:09, ZHC:04}). The mapping of attributes across web interfaces was studied in \cite{HCH:04}.

\noindent{\bf Aggregate Estimations over Hidden Web Databases:}
There has been a number of prior work in performing aggregate estimation over static hidden databases.
\cite{DJJ+10} provided an unbiased estimator for COUNT and SUM aggregates for {\em static} databases with form based interfaces.
As discussed in \S\ref{sec:rie}, single-round estimation algorithms such as \cite{DJJ+10} could be adapted for dynamic databases by
treating each round as a separate static database and rerun \cite{DJJ+10} repeatedly.
However, this is possibly a wasteful approach as no information from prior invocations are reused as shown by our experimental results.
\cite{DDM:07,DZD:09,DZD:10} describe efficient techniques to obtain random samples from hidden web databases
that can then be utilized to perform aggregate estimation.
\cite{DBLP:journals/dke/AfratiLL08} proposed an adaptive sampling algorithm for answering aggregation queries
over websites with hierarchical structure.
Recent works such as \cite{LWA12, DBLP:conf/edbt/WangA11} propose more sophisticated sampling techniques so as to reduce the variance of the aggregate estimation.
For hidden databases with keyword interfaces, prior work have studied estimating the size of
search engines \cite{BB:98,BG:06,Zhang:2011:MSE:1989323.1989406}, corpus \cite{BFJ+:06} or document collection \cite{SZS+:06}.
Unlike this paper, all these prior work assume a static database.
While some of the techniques described above are applicable for estimating single round estimates per round,
our paper initiates the formal study of estimating single and trans round aggregates over a dynamic database.

\noindent{\bf Aggregate Query Processing over Dynamic Databases:}
There has been extensive work on approximate aggregate query processing over databases
using sampling based techniques 
\cite{confsigmodChaudhuriDN01,journals/tods/ChaudhuriDN07,cddmn} 
and non sampling based techniques such as histograms 
\cite{conf/icde/PoosalaG99} 
and wavelets 
\cite{chakrabarti00approximate}.
See \cite{GG:01} for a survey.
A common approach is to build a synopsis of the database or data stream and use it for aggregate estimation.
Maintenance of statistical aggregates in the presence of database updates have been considered in
\cite{Gibbons:1999:SDS:314500.315083,Gehrke:2001:CCA:375663.375665,Dobra:2002:PCA:564691.564699}.
Another related area is answering continuous aggregate queries which are evaluated continuously over stream data
\cite{BBDMW02,Babu:2001:CQO:603867.603884}.
A major difference with prior work is that the changes to underlying database is not known to our algorithm
and we could also perform trans-round aggregate estimates.

\vspace{-1mm}
\section{Conclusion And Future Work}
In this paper we have initiated a study of estimating aggregates over dynamic hidden web databases which change over time through its restrictive web search interface. We developed two main ideas: query reissuing and bootstrapping-based query-plan adjustment. We provided theoretical analysis of estimation error and query cost for the proposed ideas, and also described a comprehensive set of experiments that demonstrate the superiority of our approach over the baseline ones on synthetic and real-world datasets.
There are a number of possible future directions including:
(1) a study of how meta data such as COUNT can be used to guide the design of drill
downs in future rounds, and (2) given a workload of aggregate queries, how to minimize the
total query cost for estimating all of them, and (3) how to leverage both keyword search
and form-like search interfaces provided by many web databases to further improve the
performance of aggregate estimations.
\vspace{-1mm}

\bibliographystyle{abbrv}
\bibliography{dynamo}  

\end{document}